\renewcommand{\vec}[1]{\mathbf{#1}}
\newcommand{\field}[1]{\mathbb{#1}}
\title{Determining Edge Expansion and Other Connectivity Measures of Graphs of Bounded Genus}
\author{Viresh Patel \thanks{Supported by EPSRC grant EP/F064551/1}}
\institute{School of Engineering and  Computing Sciences, Durham University, \\
Science Laboratories, South Road, Durham DH1 3LE, U.K. \\
\email{ viresh.patel@dur.ac.uk } }
\date{}
\begin{document}
\maketitle

\begin{abstract}
In this paper, we show that for an $n$-vertex graph $G$ of genus $g$, the edge expansion of $G$ can be determined in time $n^{O(g^2)}$. We show that the same is true for various other similar measures of edge connectivity. 
\end{abstract}

\section{Introduction}

\subsection{Background and Motivation}
Edge expansion (known also as the minimum cut quotient, the isoperimetric number, or the flux of a graph) is a well-studied notion in graph theory and arises in several contexts of discrete mathematics and theoretical computer science. These include the explicit construction of expander graphs, the analysis of certain randomised algorithms, and graph partitioning problems. In this paper, we are concerned with giving an exact algorithm for determining the edge expansion (and other similar measures) of graphs embedded on surfaces.

Throughout, we use the term graph to mean multigraph without loops, unless otherwise stated. For a graph $G=(V,E)$ and $e \in E$, we write $e=ab$ to mean that the vertices $a$ and $b$ are the end points of $e$. 

For $S$ a nonempty proper subset of $V$ and $\bar{S}$ its complement, we define
\[ [S, \bar{S}]_G = \{ e \in E: \: e=ab, \: a \in S, \: b \in \bar{S} \},
\]
which we  call an \emph{edge-cut} of $G$. (The subscript is dropped when it clear which graph we are referring to.)
For a cut $[S,\bar{S}]$ of a graph $G=(V,E)$, define the \emph{balance} of the cut to be $b(S,\bar{S}):= \min(|S|, |\bar{S}|)/|V|$. Note that the balance of a cut is a real number in the interval $(0,\frac{1}{2}]$. Two well-known graph cut problems, which take into account the balance of cuts, are the \emph{minimum quotient cut} problem and the \emph{sparsest cut} problem. These ask respectively to minimize the \emph{cut quotient} $q(S,\bar{S})$ and the \emph{cut density} $d(S,\bar{S})$ over all cuts $[S,\bar{S}]$ of a graph $G$, where
\[ q(S,\bar{S})=\frac{|[S,\bar{S}]|}{b(S,\bar{S})} \:\:\:\:\text{ and }\:\:\:\: 
d(S, \bar{S}) = \frac{|[S,\bar{S}]|}{b(S,\bar{S})(1 - b(S,\bar{S}))}.
\] 
Both $q$ and $d$ penalise unbalanced cuts, although $q$ does so to a greater extent than $d$. 

Problems such as the minimum quotient cut problem and the sparsest cut problem underlie many divide and conquer algorithms \cite{shm}, and find applications in VLSI layout problems, packet routing in distributed networking, clustering, and so on. Unfortunately, for general graphs, finding a minimum quotient cut or a sparsest cut is known to be NP-hard \cite{Gar,Mat}. Thus there are two possible ways of developing efficient algorithms for these problems: either by considering approximation algorithms or by restricting attention to certain graph classes. There has been much research done in finding approximation algorithms for these problems. Here, we mention only the seminal paper of Leighton and Rao \cite{Rao} giving a polynomial-time $O(\log n)$-approximation algorithm for the minimum quotient cut problem, and the significant improvement in the approximation factor to $O(\sqrt{\log n})$ in a paper of Arora, Rao, and Vazirani \cite{ARV}. On the hardness side, Amb{\"u}hl et al.\ \cite{AMS} proved that the sparsest cut problem admits no polynomial-time approximation scheme unless NP-hard problems can be solved in randomized subexponential time. 

We approach the problems of minimum quotient cut and sparsest cut from the perspective of developing exact polynomial-time  algorithms for restricted graph classes. Such approaches have not received as much attention in recent years as the development of approximation algorithms, but we hope this paper will take a step towards sparking interest.

Bonsma \cite{Bon} gave polynomial-time algorithms for finding sparsest cuts of unit circular graphs and cactus graphs. Park and Phillips \cite{Park}, building on the work of Rao \cite{Rao2}, gave a polynomial-time algorithm for determining the minimum quotient cut (as it has been defined above) of planar graphs. Given that many planar-graph algorithms have been adapted for generalizations of planar graphs -- see for example the introduction to \cite{CEN} and the references therein -- surprisingly little is known about the complexity of computing minimum quotient cuts or sparsest cuts for generalizations of planar graphs. Here, we generalize the algorithm of Park and Phillips to give the first exact polynomial-time algorithm for determining minimum quotient cuts and sparsest cuts of bounded-genus graphs.

\subsection{Results}

Before we state our result precisely, we give a generalization of the minimum quotient cut and sparsest cut. Notice that the denominators for both the cut quotient $q$ and the cut density $d$ are concave and increasing functions of $b(S,\bar{S})$ on the interval $[0,\frac{1}{2}]$.  For any concave, increasing function $f: [0,\frac{1}{2}] \rightarrow  [0, \infty)$ and a cut $[S, \bar{S}]$ of a graph $G$, we define
\[ d^f_G(S,\bar{S}) = \frac{|[S,\bar{S}]|}{f\big( b(S,\bar{S}) \big)},
\]
and we define
\[ d^f(G) = \min d^f_G(S, \bar{S}), \]
where the minimum is taken over all cuts $[S, \bar{S}]$ of $G$. Any cut $[S, \bar{S}]$ that minimizes $d^f_G$ is referred to as an \emph{$f$-sparsest cut} of $G$.

Let $f: [0, \frac{1}{2}] \rightarrow [0, \infty)$ be a fixed concave, increasing function that is computable in polynomial time on the rationals, and let $g$ be a fixed non-negative integer. The input for our algorithm is an $n$-vertex undirected multigraph $G$ of genus $g$. Our algorithm computes an $f$-sparsest cut of $G$ in time $O(n^{2g^2+4g+7})$.

\subsection{Overview and Techniques}

In this section we give an informal overview of our methods. Our methods extend those of Park and Phillips \cite{Park} and combine them with surface homology techniques, used for example in \cite{CEN}. 

A simple averaging argument shows that, given a graph $G$, there exists a sparsest cut $[S, \bar{S}]$ of $G$ that is \emph{minimal}, i.e.\ a cut where the graphs induced by $G$ on $S$ and $\bar{S}$ are both connected. This extends easily to $f$-sparsest cuts, where $f$ is a concave increasing function.

There is a standard correspondence between the cuts of a planar graph $G$ and the cycles of its dual $D(G)$: the minimal cuts of $G$ correspond precisely to the cycles of $D(G)$, and the size of a cut in $G$ is equal to the length of its corresponding cycle in $D(G)$. One can similarly construct a dual graph $D(G)$ for a graph $G$ embedded on a surface; however the correspondence between cuts of $G$ and cycles of $D(G)$ is not quite so simple. Roughly, for a graph $G$ of genus $g$, a cut of $G$ corresponds to a union of at most $g+1$ cycles of $D(G)$, but the reverse does not hold: a union of at most $g+1$ cycles in $D(G)$ does not necessarily correspond to a cut in $G$. Using the surface embedding of $G$, we construct a function $\Theta$ from the set of oriented edges of $D(G)$ to $\field{Z}^{2g}$ with the following property: summing $\Theta$ around the oriented edges of a union of cycles of $D(G)$ gives the zero vector if and only if that union of cycles corresponds to a (certain generalization of a) cut of $G$.

Extending and simplifying an idea from \cite{Park}, we also construct a function $\hat{w}$ from the set of oriented edges of $D(G)$ to $\field{Z}$ with the following property: if a union of cycles in $D(G)$ corresponds to a cut in $G$, then summing $\hat{w}$ around the oriented edges of cycles in the union essentially gives the balance of $[S,\bar{S}]$.

Using $D(G)$, $\Theta$, and $\hat{w}$, we construct a type of covering graph $H$, again extending an idea in \cite{Park}. For each fixed value $\vec{v}$ and $k$ of $\Theta$ and $\hat{w}$, we can use $H$ to find a shortest cycle in $D(G)$ whose $\Theta$-value is $\vec{v}$ and whose $\hat{w}$-value is $k$. Such a shortest cycle of $D(G)$ corresponds to a shortest path in $H$ between suitable vertices.

By repeatedly applying a shortest-path algorithm to $H$, we obtain, for every $\vec{v}$ and $k$ (in a suitable range), a shortest cycle of $D(G)$ whose $\Theta$-value is $\vec{v}$ and whose $\hat{w}$-value is $k$. We construct the set $X$ of every union of at most $g+1$ of these shortest cycles. The size of $X$ is $n^{O(g^2)}$, and we show that at least one element of $X$ corresponds to an $f$-sparsest cut of $G$.

\section{Preliminaries}

We begin this section by proving some simple inequalities for concave functions. Throughout, rather than working with concave increasing functions $f:[0, \frac{1}{2}] \rightarrow [0, \infty)$, we work instead with functions $f:[0,1] \rightarrow  [0, \infty)$ that are concave and increasing on $[0, \frac{1}{2}]$ and have the property that $f(x)=f(1-x)$ for all $x \in [0,1]$. We work with these functions purely for the convenience of having, for any cut $[S, \bar{S}]$ of $G$, that
\[ f(b(S,\bar{S})) = f(|S|/|V|) = f(|\bar{S}|/|V|).
\]
Note that such functions are in fact concave on their entire domain. For the algorithm, we assume that $f$ can be computed in polynomial time on the rationals.

\begin{lemma} \label{Inequality} Let $f:[0,1] \rightarrow  [0, \infty)$ be a concave increasing function on $[0, \frac{1}{2}]$ with the property that $f(x)=f(1-x)$ for all $x \in [0,1]$.
Suppose $x_1, \ldots, x_k \in [-1,1]$ and $x:= \sum_{i=1}^k x_i \in [-1,1]$. Then
\[ f(|x|) \leq  \sum_{i=1}^k f(|x_i|).
\]
\end{lemma}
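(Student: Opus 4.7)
The plan is to reduce to the case where all the $x_i$ share a common sign and then apply subadditivity of $f$. There are three ingredients: subadditivity of $f$ on $[0,1]$, a symmetry-based ``opposite-sign merge'' inequality, and an iterative reduction combining the two.

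First, I would verify that $f$ is subadditive on $[0,1]$: for $y_1,\ldots,y_m \geq 0$ with $\sum y_i \leq 1$, we have $f\bigl(\sum y_i\bigr) \leq \sum f(y_i)$. The single-pair case $f(a+b) \leq f(a)+f(b)$ (for $a,b\ge 0$ with $a+b\le 1$) is a routine consequence of concavity together with $f(0)\ge 0$: writing $a$ as a convex combination of $0$ and $a+b$ yields $f(a)\geq \tfrac{a}{a+b}f(a+b)$, and similarly for $b$; summing gives the claim. Induction on $m$ handles the general case.

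Next, I would exploit the symmetry $f(t)=f(1-t)$ to obtain the key cancellation inequality: for $0\le b\le a\le 1$,
\[
f(a-b) \;=\; f\bigl((1-a)+b\bigr) \;\leq\; f(1-a)+f(b) \;=\; f(a)+f(b),
\]
where the middle inequality is subadditivity applied to the non-negative numbers $1-a$ and $b$, whose sum $1-a+b$ lies in $[0,1]$ because $b\le a\le 1$. Setting $a=|x_i|$ and $b=|x_j|$ for a pair $x_i,x_j$ of strictly opposite signs with $|x_i|\ge |x_j|$ gives $f(|x_i+x_j|)\leq f(|x_i|)+f(|x_j|)$.

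Finally, I would run an iterative merging argument. As long as some pair $x_i,x_j$ has strictly opposite signs, replace them by the single element $x_i+x_j$: this preserves the total sum $x$, the new element still lies in $[-1,1]$ (since $|x_i+x_j|\le \max(|x_i|,|x_j|)\le 1$), and by the cancellation inequality the right-hand side $\sum f(|x_i|)$ does not increase. After finitely many merges, every nonzero remaining $x_i$ shares a common sign, so (up to negating everything) all $x_i\ge 0$, which gives $|x|=\sum x_i=\sum |x_i| \le 1$; the subadditivity of Step~1 then yields $f(|x|)\le \sum f(|x_i|)$, and we are done. The step that really needs thought is the cancellation inequality, since ordinary concavity cannot absorb a sign change on its own — the assumption $f(x)=f(1-x)$ is exactly what is needed to turn ``signed subadditivity'' into a direct application of the unsigned version.
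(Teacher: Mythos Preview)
Your proof is correct and follows essentially the same approach as the paper: both establish the two-term case by (i) subadditivity for non-negative summands via concavity and $f(0)\ge 0$, and (ii) the opposite-sign case via the symmetry trick $f(a-b)=f((1-a)+b)\le f(1-a)+f(b)=f(a)+f(b)$. The only difference is in the reduction to two terms: the paper orders the $x_i$ so that all partial sums lie in $[-1,1]$ and inducts, whereas you repeatedly merge opposite-sign pairs until all signs agree---your version is arguably cleaner, since the paper does not justify why such an ordering exists.
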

\begin{proof}
We may assume without loss of generality that $x \in [0,1]$ by switching the signs of the $x_i$ if necessary. We also know that $f$ is concave on its entire domain. Recall that a function $f:[0,1] \rightarrow [0, \infty)$ is concave if and only if, for every $a,b,t \in [0,1]$, we have that $tf(a) + (1-t)f(b) \leq f(ta + (1-t)b)$. 

We prove the inequality using induction. The case $k=1$ is trivial. We prove the case $k=2$. We have, without loss of generality, the two cases 
\begin{itemize}
\item[(i)] $x_1, x_2 \in [0,1]$ and  
\item[(ii)] $x_1 \in [0,1]$, $x_2 \in [-1,0]$.
\end{itemize}

\emph{Case (i):}
Given $x_1, x_2 \in [0,1]$ with $x = x_1+x_2 \in [0,1]$, choose $r,s \in [0,1]$ such that $x_1 = rx$ and $x_2 = sx$; thus $r+s=1$. From the concavity of $f$, we have
\[ rf(x) + (1-r)f(0) \leq f(x_1)  \:\:\:\: \text{ and } \:\:\:\:  sf(x) + (1-s)f(0) \leq f(x_2).
\]
Adding the two inequalities together, and using the fact that $r+s=1$, we obtain
\[ f(x_1) + f(x_2) \geq f(x) + f(0) \geq f(x),
\] 
as required.

\emph{Case (ii):} 
Given $x_1 \in [0,1]$, $x_2 \in [-1,0]$ with $x = x_1+x_2 \in [0,1]$ apply case (i) to the numbers $1-x_1, -x_2 \in [0,1]$. Thus, we have
\[ f(|x_1|) + f(|x_2|) = f(1-x_1) + f(-x_2) \geq f(1-x_1-x_2) = f(1-x) = f(x),
\]
as required. This proves the case $k=2$. 

The induction step follows easily. Indeed, order the $x_i$ such that $\sum_{i=1}^rx_i \in [-1,1]$ for all $r = 1, \ldots, k$. Then
\begin{align*}
f(x) = f\Big( \sum_{i=1}^k x_i \Big) &=  f\Big( \sum_{i=1}^{k-1} x_i + x_k \Big) \\
& \leq f\Big( \Big| \sum_{i=1}^{k-1} x_i \Big| \Big) + f(|x_k|)  \tag{by case $k=2$} \\
&  \leq \sum_{i=1}^{k} f(|x_i|). \tag{by induction hypothesis}
\end{align*}
\hfill$\Box$
\end{proof}

Next we prove that for any graph $G$, $d^f_G$ can be minimized by a \emph{minimal cut}, that is, a cut $[S, \bar{S}]$ for which both $G[S]$ and $G[\bar{S}]$ are connected (here $G[A]$ denotes the graph induced by $G$ on $A \subseteq V$). This is a well-known fact for the cut quotient $q$ and the cut density $d$, and generalises easily to $d^f_G$. Before we can do this, we need a trivial averaging argument; we state it formally so that we can refer to it later.

\begin{proposition} \label{Ave}
For $i=1, \ldots, k$, let $a_i$, $p_i$, and $q_i$ be non-negative real numbers with at least one $q_i$ non-zero. Then there exists some $i' \in \{1, \ldots, k \}$ such that $q_{i'} \not= 0$ and
\[ \frac{p_{i'}}{q_{i'}} \leq \frac{ \sum_{i=1}^k a_ip_i }{ \sum_{j=1}^k a_jq_j }.
\]
\end{proposition}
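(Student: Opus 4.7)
The plan is to give a short contradiction/averaging argument. Set
\[ R = \frac{\sum_{i=1}^k a_i p_i}{\sum_{j=1}^k a_j q_j}, \]
which is well-defined provided the denominator is positive (this is an implicit requirement for the conclusion to even be meaningful; it is guaranteed as long as some $a_i q_i$ pair is positive, which is the only substantive case). Suppose, for contradiction, that every index $i'$ with $q_{i'} \neq 0$ satisfies $p_{i'}/q_{i'} > R$, equivalently $p_{i'} > R q_{i'}$.

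Next I would split the sum $\sum_i a_i(p_i - R q_i)$ over two classes of indices. For indices with $q_i = 0$, we have $a_i(p_i - R q_i) = a_i p_i \geq 0$. For indices with $q_i > 0$, the assumption gives $p_i - R q_i > 0$, so $a_i(p_i - R q_i) \geq 0$, with strict inequality whenever $a_i > 0$. Since the denominator $\sum_j a_j q_j$ is positive, there is at least one index $i$ with $a_i > 0$ and $q_i > 0$, and for that index the contribution is strictly positive. Summing over all $i$ therefore yields
\[ \sum_{i=1}^k a_i p_i - R \sum_{i=1}^k a_i q_i > 0, \]
which contradicts the definition of $R$. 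Hence some $i'$ with $q_{i'} \neq 0$ must satisfy $p_{i'}/q_{i'} \leq R$, as required.

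I do not expect any serious obstacle here: the only mild subtlety is bookkeeping around indices where $q_i = 0$ or $a_i = 0$, which is handled cleanly by separating the two cases as above. The statement is essentially the standard mediant inequality for weighted averages of ratios, and the contradiction argument is the cleanest route; a direct argument (picking $i'$ to minimise $p_{i'}/q_{i'}$ among indices with $q_{i'} \neq 0$ and $a_{i'} > 0$) would work equally well and amounts to the same calculation.
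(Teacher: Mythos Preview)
Your proof is correct and is essentially the same averaging argument as the paper's; the paper just runs it directly rather than by contradiction, choosing $i'$ to minimise $p_{i'}/q_{i'}$ (with the convention $p_i/0=\infty$), then cross-multiplying and summing. Your own final remark about the direct minimisation variant describes exactly what the paper does.
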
 
\begin{proof}
Choose $i'$ to minimize $p_{i'}/q_{i'}$ (if $q_i=0$, we take $p_i/q_i$ to be $\infty$). Thus we have that $p_{i'}/q_{i'} \leq p_{i}/q_{i}$ for all $i= 1, \ldots, k$. Rearranging, we have that $p_{i'}(a_iq_i) \leq (a_ip_i)q_{i'}$ for all $i= 1, \ldots, k$. Summing both sides over $i$ and rearranging gives the desired inequality.
\hfill$\Box$
\end{proof}

\begin{proposition} \label{Concut}
Let $G=(V,E)$ be a connected graph and let $f:[0,1] \rightarrow [0, \infty)$ be as in Lemma~\ref{Inequality}. Then there exists a cut $[S, \bar{S}]$ of $G$ such that $d^f_G(S, \bar{S}) = d^f(G)$ and for which $G[S]$ and $G[\bar{S}]$ are both connected. 
\end{proposition}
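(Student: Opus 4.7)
The plan is to start with an arbitrary cut $[S,\bar S]$ achieving $d^f_G(S,\bar S)=d^f(G)$ and refine it in two stages so that first $G[S]$ becomes connected and then $G[\bar S]$ becomes connected, verifying at each stage that the refined cut still attains $d^f(G)$.

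In the first stage, suppose $G[S]$ has components $S_1,\dots,S_k$ with $k\ge 2$. Since distinct components of $G[S]$ share no edges, every edge of $[S,\bar S]$ has its $S$-endpoint in exactly one $S_i$, so $|[S,\bar S]|=\sum_i |[S_i,\bar{S_i}]|$. Writing $n=|V|$ and using the symmetry $f(x)=f(1-x)$ to identify $f(b(S,\bar S))$ with $f(|S|/n)$ and $f(b(S_i,\bar{S_i}))$ with $f(|S_i|/n)$, Lemma~\ref{Inequality} applied to the decomposition $|S|/n=\sum_i |S_i|/n$ yields $f(b(S,\bar S))\le \sum_i f(b(S_i,\bar{S_i}))$. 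Therefore
\[
 d^f_G(S,\bar S) \;=\; \frac{\sum_i |[S_i,\bar{S_i}]|}{f(b(S,\bar S))} \;\ge\; \frac{\sum_i |[S_i,\bar{S_i}]|}{\sum_i f(b(S_i,\bar{S_i}))},
\]
so Proposition~\ref{Ave} (with all $a_i=1$) produces an index $i'$ with $d^f_G(S_{i'},\bar{S_{i'}}) \le d^f(G)$. By minimality this is an equality, so $[S_{i'},\bar{S_{i'}}]$ is itself a minimizing cut with $G[S_{i'}]$ connected and $|S_{i'}|<|S|$. Iterating the replacement $[S,\bar S]\gets[S_{i'},\bar{S_{i'}}]$ strictly decreases $|S|$, so after finitely many steps we obtain a minimizing cut for which $G[S]$ is connected.

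For the second stage I apply the identical argument with the roles of $S$ and $\bar S$ interchanged, producing a minimizing cut $[T_{j'},V\setminus T_{j'}]$ where $T_{j'}$ is a component of $G[\bar S]$ and $G[T_{j'}]$ is connected. The substantive check is that $G[V\setminus T_{j'}]=G\bigl[S\cup \bigcup_{j\ne j'}T_j\bigr]$ is also connected, and this is where the hypothesis that $G$ itself is connected enters: each component $T_j$ of $G[\bar S]$ has no edges to other components inside $\bar S$, so its only way to reach the rest of $G$ is through an edge to $S$; hence every $T_j$ with $j\ne j'$ attaches to the already-connected set $S$ inside $G[V\setminus T_{j'}]$. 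The main obstacle is this last verification: a naive alternation between the two sides could loop, since refining one side may shatter the other, but after the first stage the combination of $G[S]$ connected and $G$ connected forces each surviving component of $G[\bar S]$ to re-glue to $S$, so a single second refinement completes the proof.
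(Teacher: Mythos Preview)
Your proof is correct. The core inequality---decompose one side into connected components, bound the denominator via Lemma~\ref{Inequality}, and extract a good component via Proposition~\ref{Ave}---is exactly what the paper uses. The difference lies only in how the two sides are made simultaneously connected. The paper makes a single extremal choice up front: among all minimizing cuts it selects one $[A,B]$ of minimum edge count $|[A,B]|$, and then observes that if either side were disconnected the component argument would produce a minimizer with strictly fewer cut edges (connectivity of $G$ guarantees every component contributes at least one cut edge), contradicting the extremal choice. This disposes of both sides at once, with no iteration and no need to verify that repairing one side preserves connectivity of the other. Your two-stage argument is equally valid and a bit more constructive, but it requires the explicit check in stage~2 that $S\cup\bigcup_{j\ne j'}T_j$ is connected; you carry out that check correctly. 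A minor remark: your stage~1 actually terminates in a single step, since $S_{i'}$, being a component of $G[S]$, is already connected---the iteration is harmless but unnecessary.
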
 
\begin{proof}
Amongst all cuts of $G$ minimizing $d^f_G$, let $[A,B]$ be one of minimum size. Suppose that $G[A]$ has components $A_1, \ldots, A_r$ and $G[B]$ has components $B_1, \ldots, B_s$. We claim that there is some $i'$ for which $d^f_G(A_{i'},\bar{A_{i'}}) \leq d^f_G(A,B)$ (and by symmetry, there is some $j'$ for which $d^f_G(B_{j'},\bar{B_{j'}}) \leq d^f_G(A,B)$). Since $G$ is connected and $[A,B]$ is of minimum size, the claim implies that both $G[A]$ and $G[B]$ have only a single component, proving the proposition.
It remains to prove the claim.
Let $|V|=n$, $a_i = |A_i|/n$, and $b_i = |B_i|/n$. Let $a=|A|/n = \sum_{i=1}^ra_i$ and $b=|B|/n = \sum_{i=1}^sb_i$. Let
\[ e_{ij}= |\{ ab \in E: a \in A_i, b \in B_j \}| \:\:\: \text{ and } \:\:\: 
d^f_{ij}= \frac{e_{ij}}{f(a_i)}.
\] 
We have that
\begin{align*}
d^f(G) = d^f_G(A,B) = \frac{|[A,B]|}{f\big( b(A,B) \big)} &= \frac{\sum_{i=1}^r\sum_{j=1}^se_{ij}}
{f\big(\sum_{i=1}^ra_i\big)} \\
&= \frac{\sum_{i=1}^r f(a_i) \sum_{j=1}^s d^f_{ij} }
{f\big(\sum_{i=1}^ra_i\big)} \\
&\geq  \frac{\sum_{i=1}^r f(a_i) \sum_{j=1}^s d^f_{ij} }
{\sum_{i=1}^rf(a_i)},
\end{align*}
where the inequality follows from Lemma~\ref{Inequality}.
Applying Proposition~\ref{Ave}, we find that there exists some $i'$ such that
\begin{align*}
d^f(G) = d^f_G(A,B) 
 \geq  \frac{f(a_{i'}) \sum_{j=1}^s d^f_{i'j} }{f(a_{i'})} 
= \frac{ \sum_{j=1}^s e_{i'j}}{f(a_{i'})} 
= \frac{ [A_{i'}, \bar{A_{i'}}] }{f(a_{i'})}
= d^f_G(A_{i'}, \bar{A_{i'}}).
\end{align*}
\hfill$\Box$
\end{proof}

It turns out that the description and the proof of correctness of our algorithm is most conveniently and naturally expressed in the language of surface homology. Through the remainder of this section,  we introduce the necessary concepts keeping our treatment as simple and self-contained as possible. One can find more comprehensive treatments in e.g.\ \cite{Gib,GT}.

Although we are only concerned with undirected graphs when determining quotient cuts, sparsest cuts, and other vulnerability measures, we shall have cause to orient edges of our graph through the course of our proofs and algorithms. Each edge $e = ab$ of a graph $G=(V,E)$ has two orientations, namely $(a,e,b)$ and $(b,e,a)$. The two orientations are denoted $\overrightarrow{e}$ and $\overleftarrow{e}$, although we cannot say which is which in general. Given a set $E$ of edges, we write $\overrightarrow{E}$ for the set of their orientations, two for each edge. The \emph{edge space} of $G$, denoted $\mathcal{E}(G)$, is the free abelian group on $\overrightarrow{E}$ modulo the relation that $\overleftarrow{e} = -\overrightarrow{e}$. For each $\rho  \in \mathcal{E}(G)$, we can express $\rho$ uniquely as 
\[\rho = \sum_{\overrightarrow{e} \in \overrightarrow{E}} \lambda_{\overrightarrow{e}}\overrightarrow{e},\]
 where $\lambda_{\overrightarrow{e}} \in \field{Z}$ for all $\overrightarrow{e} \in \overrightarrow{E}$ and $\min(\lambda_{\overrightarrow{e}}, \lambda_{\overleftarrow{e}}) = 0$. Then, we define
\[ |\rho| = \sum_{\overrightarrow{e} \in \overrightarrow{E}} \lambda_{\overrightarrow{e}}.
\]
Thus $|\rho|$ in a sense counts the number of edges in $\rho$.

The \emph{cut space} $\mathcal{T}(G)$ of $G=(V,E)$, which is a subgroup of $\mathcal{E}(G)$, is defined as follows. For a cut $[S,\bar{S}]$ of $G$, define 
\[ \overrightarrow{[S,\bar{S}]} = \sum_{ab = e \in E  \atop  a \in S, \: b \notin S} (a,e,b).
\]
Then $\mathcal{T}(G)$ is the subgroup of $\mathcal{E}(G)$ generated by $\{\overrightarrow{[S,\bar{S}]}: S \subseteq V \}$. 

The next lemma shows how, by suitably assigning weights to oriented edges of a graph, we can determine the balance of a cut simply by summing the weights of the edges in the (oriented) cut. This is a generalisation of a result for planar graphs that was presented (but not proved) in \cite{Park}.

\begin{lemma} \label{Weight}
Let $G=(V,E)$ be a connected graph and let $v \in V$ be some fixed vertex. There exists a function $w:\overrightarrow{E} \rightarrow \field{Z}$ with the following properties. 
\begin{itemize}
\item[(i)]For every $\overrightarrow{e} \in \overrightarrow{E}$, we have $w(\overleftarrow{e})=-w(\overrightarrow{e})$. 
\item[(ii)] For every $\overrightarrow{e} \in \overrightarrow{E}$, we have $|w(\overrightarrow{e})| \leq |V|$.
\item[(iii)] For every $S$ satisfying $v \in S \subseteq V$, we have that
\[ w( \overrightarrow{[S,\bar{S}]} ):= \sum_{(a,e,b) \in \overrightarrow{[S, \bar{S}]} }w(a,e,b) = |\bar{S}|.
\]
\end{itemize}
Furthermore, a function satisfying the above properties can be constructed in $O(n^2)$ time.
\end{lemma}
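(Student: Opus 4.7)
\textbf{Proof plan for Lemma~\ref{Weight}.}

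The plan is to build $w$ from a rooted spanning tree of $G$. Pick any spanning tree $T$ of $G$ and root it at $v$. For each non-root vertex $u$, let $p(u)$ denote its parent in $T$, write $e_u$ for the tree edge joining $u$ and $p(u)$, and let $T_u$ denote the subtree of $T$ hanging off $u$ (i.e.\ the component of $T - e_u$ not containing $v$). I will define
\[
w(p(u),e_u,u) = |T_u|, \qquad w(u,e_u,p(u)) = -|T_u|,
\]
and set $w(\overrightarrow{e}) = 0$ on both orientations of every non-tree edge.

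Property (i) holds by construction, and property (ii) is immediate from $|T_u| \le |V|$. The heart of the argument is property (iii). The plan is a double-counting argument along root-paths in $T$. Fix $S$ with $v \in S$. Since only tree edges contribute,
\[
w\bigl(\overrightarrow{[S,\bar S]}\bigr) = \sum_{u \neq v} \chi(u)\,|T_u|,
\]
where $\chi(u) = +1$ if $p(u) \in S, u \in \bar S$, equals $-1$ if $p(u) \in \bar S, u \in S$, and is $0$ otherwise. Expanding $|T_u| = \sum_x \mathbf{1}[x \in T_u]$ and swapping sums, the contribution of each $x \in V$ is $\sum_{u \neq v,\; x \in T_u} \chi(u)$, where the indices $u$ trace out precisely the proper ancestors of $x$ together with $x$ itself (excluding $v$), i.e.\ the vertices $x = u_0, u_1, \dots, u_{k-1}$ on the tree path from $x$ up to $v = u_k$. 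Since $\chi(u_i)$ records exactly the signed $\bar S$-to-$S$ crossing of the edge $(u_i, u_{i+1})$, the telescoping sum $\sum_i \chi(u_i)$ equals the net number of $\bar S \to S$ crossings along the path from $x$ to $v$. Because this path ends in $S$, the sum is $1$ if $x \in \bar S$ and $0$ if $x \in S$. Summing over $x$ gives $|\bar S|$, as required.

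For the running time, a spanning tree is found by BFS in $O(n + m)$ time, the subtree sizes $|T_u|$ are computed by a single post-order traversal in $O(n)$ time, and assigning $w$ to the two orientations of each edge is $O(m)$; since we may assume we have processed the edge set once, this fits inside the stated $O(n^2)$ bound.

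The only nontrivial point is property (iii); the main obstacle is simply spotting the right ``potential-like'' identity, namely rewriting $w(\overrightarrow{[S,\bar S]})$ as a double sum indexed by ancestors so that the telescoping along root-paths can do the work. Once this is set up the argument is clean, and properties (i), (ii) and the runtime are essentially free from the spanning-tree construction.
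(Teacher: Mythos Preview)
Your proposal is correct and uses the same construction as the paper: take a spanning tree rooted at $v$, set $w$ on each tree edge to the size of the subtree on the child side (with the appropriate sign), and set $w$ to zero on non-tree edges. The only difference is in the verification of (iii): you expand $|T_u|$ as a sum over vertices and telescope along root-paths, whereas the paper argues by induction on the tree structure; both are routine once the construction is in place.
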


\begin{remark}
The function $w$ described in Lemma~\ref{Weight} can be extended to a homomorphism $w: \mathcal{E}(G) \rightarrow \field{Z}$ because of property (i).
\end{remark}

\begin{proof}
Let $T=(V,E_T)$ be a spanning tree of $G$ and let $v$ be a root of $T$. If the lemma holds for $T$, then it clearly holds for $G$ by setting $w(a,e,b)=0$ for all $ab \in E \backslash E_T$.

Let $ab \in E_T$, and without loss of generality, assume that $b$ is a descendant of $a$ (that is $b$ is further from $v$ than $a$). Deleting $ab$ disconnects $T$ into two components $S_{ab}$ and $\bar{S}_{ab}$, where $\bar{S}_{ab}$ is the component not containing $v$ (and hence not containing $a$). Set $w(a,e,b)=-w(b,e,a)=|\bar{S}_{ab}|$. We do this for every $ab \in E_T$.

Clearly $w$ satisfies properties (i) and (ii), and furthermore, it is not hard to see that $w$ can be constructed in $O(n^2)$ time. It remains only to prove property (iii). We prove the claim by induction. Let $v_1, \ldots, v_r$ be the vertices of $T$ adjacent to $v$. Let $T_i=(V_i,E_i)$ be the subtree of $T$ formed from $v_i$ and its descendants for $i=1, \ldots, r$. Given $v \in S \subset V$ and $\bar{S}=V \backslash S$, let $S_i = S \cap V_i$ and $\bar{S_i}=\bar{S} \cap V_i$. Also, let 
\[ M = \{i: v_i \in S \}.
\]
Observe that
\[ \overrightarrow{[S,\bar{S}]} = \sum_{i=1}^r \overrightarrow{[S_i, \bar{S_i}]} + \sum_{i \not\in M} (v, vv_i, v_i).
\]
Now we have
\begin{align*}
w( \overrightarrow{[S,\bar{S}]} ) &= \sum_{i=1}^r w(\overrightarrow{[S_i, \bar{S_i}]}) + \sum_{i \not\in M} w(v, vv_i, v_i) \\
&= \sum_{i \in M}|\bar{S_i}| - \sum_{i \not\in M}|S_i| + \sum_{i \not\in M} w(v,vv_i,v_i) \:\:\:\text{ (induction)} \\
&= \sum_{i \in M}|\bar{S_i}| + \sum_{i \not\in M}(|V_i|-|S_i|) \\
&= \sum_{i=1}^r|\bar{S_i}| = |\bar{S}|.
\end{align*}
\hfill$\Box$
\end{proof}


Throughout, $w$ will be a homomorphism from $\mathcal{E}(G)$ to $\field{Z}$ satisfying the properties of Lemma~\ref{Weight}.

The domain for the function $d^f_G$ is the set of all cuts of $G$. It turns out that it is necessary to extend $d^f_G$ in a natural way to a function on $\mathcal{T}(G)$. Minimizing $d^f_G$ on $\mathcal{T}(G)$ will be equivalent to minimizing it on the set of cuts of $G$. Although $\mathcal{T}(G)$ is an infinite set, we shall eventually look to minimize $d^f_G$ over a suitable finite subset of $\mathcal{T}(G)$.  

For each $\phi \in \mathcal{T}(G)$, if $w(\phi) \not = 0$,  we define
\begin{equation}
 d^f_G(\phi) := \frac{|\phi|}{f(|w(\phi)|/n)}  
\end{equation}
where $n=|V|$; if $f(|w(\phi)|) = 0$, we define $d^f_G(\phi) = \infty$. Note that if $\phi = \overrightarrow{[S,\bar{S}]}$, then $|\phi| = |[S,\bar{S}]|$ and $|w(\phi)|$ is either $|S|$ or $|\bar{S}|$. Hence the function $d^f_G$ defined above extends the definition of $d^f_G$ given in the introduction.

Our next lemma shows that minimizing $d^f_G$ over $\mathcal{T}(G)$ is equivalent to minimizing $d^f_G$ over simple cuts $\overrightarrow{[S,\bar{S}]}$ of $G$. First some observations. For the cut space $\mathcal{T}(G)$ of $G=(V,E)$, we note that if $S$ is the union of disjoint subsets $S_1$ and $S_2$ of $V$, then 
\[ \overrightarrow{[S, \bar{S}]} = \overrightarrow{[S_1, \bar{S_1}]} + \overrightarrow{[S_2, \bar{S_2}]}.  
\] 
Note also that $\overrightarrow{[S, \bar{S}]} = -\overrightarrow{[\bar{S},S]}$. Thus every element of $\mathcal{T}(G)$ can be written as a positive integer linear combination of single vertex cuts $\overrightarrow{[\{v\},\bar{\{v\}}]}$. 

\begin{lemma} \label{Avecut}
Let $G=(V,E)$ be a graph. For every $\phi \in \mathcal{T}(G)$, there exists $S \subseteq V$ such that
\[ d^f_G(S,\bar{S}) \leq d^f_G(\phi),
\]
so in particular 
\[ \min_{\phi \in \mathcal{T}(G)} d^f_G(\phi) = d^f(G).
\]
\end{lemma}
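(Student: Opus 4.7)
The plan is to reduce the minimization of $d^f_G$ over $\mathcal{T}(G)$ to minimization over genuine cuts by an averaging argument, the key step being a decomposition of $\phi$ as a non-negative combination of a \emph{nested} family of cuts. The nested structure is precisely what prevents cancellation of oriented edges when the cuts are summed, and this no-cancellation property is essential for the averaging to go through.

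Concretely, using the observations preceding the lemma, I would first write $\phi$ as a non-negative integer combination of single-vertex cuts, $\phi = \sum_{v \in V} c_v \overrightarrow{[\{v\},\bar{\{v\}}]}$ with each $c_v \geq 0$; the freedom to shift all $c_v$ by a common integer follows from $\sum_{v \in V} \overrightarrow{[\{v\},\bar{\{v\}}]} = 0$. Relabelling so that $c_{v_1} \geq \cdots \geq c_{v_n}$ and setting $T_j = \{v_1,\ldots,v_j\}$ and $\alpha_j = c_{v_j} - c_{v_{j+1}} \geq 0$, a telescoping calculation gives $\phi = \sum_{j=1}^{n-1} \alpha_j \overrightarrow{[T_j, \bar{T_j}]}$. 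Because the $T_j$ are nested, each edge $v_a v_b$ with $a < b$ appears in every cut crossing it with the same orientation $(v_a, e, v_b)$, yielding the crucial cancellation-free equality $|\phi| = \sum_j \alpha_j |[T_j, \bar{T_j}]|$.

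With $w_j := w(\overrightarrow{[T_j, \bar{T_j}]})$, Lemma~\ref{Weight} ensures $|w_j|/n \in [0,1]$, and since $w$ is a homomorphism, $w(\phi)/n = \sum_j \alpha_j (w_j/n)$. Applying Lemma~\ref{Inequality} to the multiset containing $\alpha_j$ copies of $w_j/n$ for each $j$ yields $f(|w(\phi)|/n) \leq \sum_j \alpha_j f(|w_j|/n)$. Proposition~\ref{Ave} applied with $a_j = \alpha_j$, $p_j = |[T_j, \bar{T_j}]|$, and $q_j = f(|w_j|/n)$ then produces an index $j'$ satisfying
\[ \frac{p_{j'}}{q_{j'}} \;\leq\; \frac{\sum_j \alpha_j p_j}{\sum_j \alpha_j q_j} \;=\; \frac{|\phi|}{\sum_j \alpha_j f(|w_j|/n)} \;\leq\; \frac{|\phi|}{f(|w(\phi)|/n)} \;=\; d^f_G(\phi). \]
Since $f(x) = f(1-x)$, we have $f(|w_{j'}|/n) = f(b(T_{j'}, \bar{T_{j'}}))$, so $S := T_{j'}$ is a cut with $d^f_G(S,\bar{S}) \leq d^f_G(\phi)$. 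The second identity then follows immediately, since genuine cuts already sit inside $\mathcal{T}(G)$.

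The main obstacle is securing the equality $|\phi| = \sum_j \alpha_j |[T_j, \bar{T_j}]|$: an arbitrary non-negative combination of cuts can have $|\phi|$ strictly smaller than the sum of the cut sizes through cancellation of oppositely oriented edges, and this slack is in the wrong direction to close the averaging step. The laminar structure obtained by sorting the $c_v$ is precisely what forces every edge to appear with a single consistent orientation across all cuts crossing it, restoring equality.
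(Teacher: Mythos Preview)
Your proof is correct and follows essentially the same approach as the paper's: both decompose $\phi$ into a nested (laminar) family of cuts via the level sets of the coefficients $c_v$, use the nesting to get the cancellation-free identity $|\phi| = \sum \text{(coeff)}\,|[\,\cdot\,,\,\bar{\cdot}\,]|$, and then finish with Lemma~\ref{Inequality} and Proposition~\ref{Ave}. The only cosmetic difference is that the paper writes the decomposition with unit coefficients as $\phi = \sum_{i=1}^{k} \overrightarrow{[S_i,\bar{S_i}]}$ where $S_i = \{v : \lambda_v \geq i\}$, whereas you group repeated level sets and carry the multiplicities $\alpha_j$ explicitly; these are the same decomposition written two ways.
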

The proof uses the same averaging argument used in the proof of Proposition~\ref{Concut}.
\begin{proof}
The lemma clearly holds if $d^f_G(\phi) = \infty$, so assume $d^f_G(\phi) < \infty$. Choose an integer $k$ such that 
\[ \phi = \sum_{v \in V} \lambda_v \overrightarrow{ [\{v\}, \bar{\{v\}}] },
\]
where $0 < \lambda_v \leq k$ for all $v \in V$. For each $i= 1, \ldots, k$, let
\[ S_i = \bigcup_{\lambda_v \geq i}\{v\}.
\]
Thus we have $S_1 \supseteq S_2 \supseteq \cdots \supseteq S_k$, and
\[ \phi = \sum_{i=1}^{k}  \overrightarrow{ [S_i,\bar{S_i}] }.
\]
Since the sets are nested, there is no cancelling of edges; hence
\[ |\phi| = \sum_{i=1}^{k} |\overrightarrow{[S_i,\bar{S_i}]}|.
\]
Using the fact that $w$ is a homomorphism together with the triangle inequality, we have
\[ |w(\phi)| = \Big| \sum_{i=1}^{k} w(\overrightarrow{[S_i,\bar{S_i}]}) \Big| \leq 
\sum_{i=1}^{k} |w(\overrightarrow{[S_i,\bar{S_i}]})|. 
\]
Now we have
\begin{align*}
d^f_G(\phi) = \frac{|\phi|}{f(|w(\phi)|/n)} &\geq 
\frac{ \sum_{i=1}^{k} |\overrightarrow{[S_i,\bar{S_i}]}| }{ f\big(\sum_{i=1}^{k}  |\frac{w(\overrightarrow{[S_i,\bar{S_i}]})}{n} |\big) } \\
&\geq 
\frac{ \sum_{i=1}^{k} |\overrightarrow{[S_i,\bar{S_i}]}| }{ \sum_{i=1}^{k} f(|\frac{w(\overrightarrow{[S_i,\bar{S_i}]})}{n} |) } &\text{Lemma \ref{Inequality} } \\
&\geq 
\frac{|\overrightarrow{[S_{i'},\bar{S_{i'}}]}|}{f(|\frac{w(\overrightarrow{[S_{i'},\bar{S_{i'}}]})}{n} |)} &\text{ for some $i'$ by Proposition~\ref{Ave} } \\
&= 
d^f_G(S_{i'},\bar{S_{i'}}).
\end{align*}
\hfill$\Box$
\end{proof}

We specify a \emph{walk} $w$ of a graph $G=(V,E)$ by giving an alternating sequence of vertices and edges $w=(x_1,e_1,x_2,e_2, \ldots, x_{k-1},e_{k-1},x_k)$, where $(x_i,e_i,x_{i+1}) \in \overrightarrow{E}$ for all $i$. We write $|w|$ for the number of edges traversed in $w$ (which in this case is $k-1$). If $x_1=x_k$ then $w$ is called a \emph{closed walk}. If all edges of a closed walk $w$ are distinct, then $w$ is called a called a \emph{circuit} of $G$. If all the vertices of a closed walk $w$ are distinct (except $x_1=x_k$), then $w$ is called a \emph{cycle}. A walk in which all vertices are distinct is called a \emph{path}.  We write $\overrightarrow{w}$ for the element of $\mathcal{E}(G)$ given by
\[ \overrightarrow{w} = \sum_{i=1}^{k-1} (x_i, e_i, x_{i+1});
\]
we refer to $\overrightarrow{w}$ as an \emph{oriented} walk, (circuit, etc). Note that for a walk $w$, we have $|w| \geq |\overrightarrow{w}|$ with equality when $w$ is a circuit.

The \emph{cycle space} $\mathcal{C}(G)$ is the subgroup of $\mathcal{E}(G)$ (redundantly) generated by the oriented cycles $\overrightarrow{c}$ of $G$. Note that $\mathcal{C}(G)$ contains all oriented closed walks of $G$.

We now turn our attention to graphs embedded on closed orientable surfaces. Formally, a surface is a compact connected topological space in which every point of the surface has an open neighbourhood homeomorphic to $\field{R}^2$ or the closed halfplane $\{(x,y) \in \field{R}^2: y \geq 0\}$. The set of points having halfplane open neighbourhoods is called the boundary of the surface. Every component of the boundary is homeomorphic to the circle $S^1$. A closed surface is one without boundary. A surface is called orientable if it does not contain a subset (with the subset topology) homeomorphic to the m{\"o}bius band.

The genus of a connected, orientable surface is the maximum number of cuttings along non-intersecting closed simple curves that can be made without disconnecting the surface. It is well known from the classification of surfaces that every closed orientable surface of genus $g$ is homeomorphic to a sphere with $g$ handles. Thus we can think of every such surface as embedded in $\field{R}^3$. Informally, a graph $G$ can be embedded on a surface $\Sigma$ if $G$ can be drawn on $\Sigma$ in such a way that no edge crosses a vertex or another edge, except possibly at its end points. For example, all planar graphs can be embedded on the sphere.

More formally, we have the following definitions. A (topological) path in $\Sigma$ is a continuous function $\gamma:[0,1] \rightarrow \Sigma$, where $\gamma(0)$ and $\gamma(1)$ are called the end points of the path. If $\gamma(0)=\gamma(1)$ then $\gamma$ is called a (topological) cycle. Paths and cycles are referred to collectively as curves. A curve is simple if it is injective, except in the case of a cycle where we permit $\gamma(0)=\gamma(1)$. We often do not distinguish between curves and their images in $\Sigma$. 

Throughout, we shall consider embeddings of multigraphs with loops on orientable surfaces. (Although $G$ has no loops, its dual graph may have loops as we shall discuss later.) An embedding of a graph $G$ in $\Sigma$ maps vertices $v$ to distinct points $\psi(v)$ of $\Sigma$ and maps edges (resp.\ loops) $e=ab$ to simple topological paths (resp.\ cycles) $\gamma_e$ of $\Sigma$: two such paths may intersect (if at all) only at their end points, and the end points of $\gamma_e$ are precisely $\psi(a)$ and $\psi(b)$. Oriented edges $\overrightarrow{e}=(a,e,b)$ are embedded by simple paths $\gamma_{\overrightarrow{e}}$, where we insist that $\gamma_{\overrightarrow{e}}(0)=\psi(a)$ and $\gamma_{\overrightarrow{e}}(1)=\psi(b)$. We often abuse terminology and notation by identifying vertices, edges, and walks of $G$ with their images in the embedding of $G$ on $\Sigma$. 

From an algorithmic point of view, one can use \emph{rotation systems} to input or output embeddings of graphs on surfaces. We do not define rotation systems here because we shall only use graph embeddings indirectly when applying existing algorithms to our problem; instead we refer the reader to \cite{Tho}.

Throughout, we shall only consider cellular embeddings. An embedding of $G$ on $\Sigma$ is called \emph{cellular} if removing the image of $G$ from $\Sigma$ leaves a set of topological disks called the \emph{faces} of $G$. The \emph{genus} of a graph $G$ is defined to be the smallest integer $g$ such that $G$ can be embedded on a closed orientable surface of genus $g$. If $G$ is a graph of genus $g$, then every embedding of $G$ on a closed orientable surface of genus $g$ is cellular (Proposition 3.4.1 \cite{Tho}), and for fixed $g$, such an embedding can be found in linear time \cite{Moh}. Euler's Theorem gives the following relationship between the number of vertices $n$, the number of edges $m$, the number of faces $\ell$, and the number of boundary components $b$ in a cellular embedding of a graph $G$ on a surface $\Sigma$ of genus $g$:
\[n - m + \ell + b = 2-2g.
\]

We now define the \emph{boundary space} of a graph $G$ embedded on $\Sigma$. Each oriented edge $\overrightarrow{e}$ of $G$ separates two (possibly equal) faces of $G$ denoted $left(\overrightarrow{e})$ and $right(\overrightarrow{e})$. (The notion of left and right with respect to an oriented edge is well defined for orientable surfaces.) For a face $F$, the oriented edges $\overrightarrow{e}$ for which $left(\overrightarrow{e}) = F$ taken in order (with appropriate intervenng vertices) form a closed walk $f$ around $F$, which we call the facial walk of $F$.  Let $F_1, \ldots, F_{\ell}$ be the faces of the embedding and let $f_i$ be the facial walk of $F_i$. The oriented facial walk $\overrightarrow{f_i}$ of $F_i$ is given by 
\[ \overrightarrow{f_i} = \sum_{\overrightarrow{e}: \: left(\overrightarrow{e})=F_i} \overrightarrow{e}. 
\]
Notice that $f_i$ may contain two oppositely oriented edges, but such edges cancel in $\overrightarrow{f_i}$, leaving the sum of oriented edges that form the boundary of $F_i$. The boundary space $\mathcal{B}(G, \Sigma)$ is defined to be the group generated by $\overrightarrow{f_1}, \ldots, \overrightarrow{f_{\ell}}$ and is easily seen to be a subgroup of $\mathcal{C}(G)$.
Note that the boundary space of $G$, in contrast to the cycle space and cut space, depends on the embedding of $G$. 



The geometric dual of a graph $G$ cellularly embedded on $\Sigma$ is denoted by $D(G)=(V',E')$ and is the graph (with cellular embedding on $\Sigma$) constructed from $G$ as follows. For each face $F$ of $G$, a vertex $D(F)$ is placed inside $F$: these are the vertices of $D(G)$. Each edge $e$ of $G$ has a corresponding edge $D(e)$ in $D(G)$: $D(e)=D(F_1)D(F_2)$, where $e$ is the edge separating the (possibly indistinct) faces $F_1$ and $F_2$ (and $D(e)$ crosses $e$ and no other edge of $G$ in the embedding of $D(G)$). Note that $G$ (with its embedding) is a dual of $D(G)$ (with its embedding). Therefore, each vertex $v$ of $G$ corresponds to a face $D(v)$ of $D(G)$. Thus $D$ maps vertices, edges, and faces of $G$ bijectively to faces, edges, and vertices of $D(G)$ respectively. We extend $D$ to map oriented edges of $G$ to oriented edges of $D(G)$ as follows. Given an oriented edge $\overrightarrow{e}$, we set $D(\overrightarrow{e}) = (D(left(\overrightarrow{e})),D(e),D(right(\overrightarrow{e})))$. This, however, reverses the sense of left and right so that $D(D(\overrightarrow{e}))= -\overrightarrow{e}$. 
\begin{remark}
Although $G$ is a loopless graph, $D(G)$ may not be. Nonetheless, all notions introduced so far carry through naturally for loops of embedded graphs. In particular, a loop $e$ of an embedded graph $G$ has two orientations $\overrightarrow{e}$ and  $\overleftarrow{e}$, although we cannot specify which is which by the order of the end vertices. If $D(e)$ is a loop for some edge $e$ of $G$, then $D(\overrightarrow{e})$ should cross $\overrightarrow{e}$ from left to right.
\end{remark}

Since $D$ bijectively maps edges of $G$ to edges of $D(G)$, we see that $D$ can be extended to an isomorphism $D: \mathcal{E}(G) \rightarrow \mathcal{E}(D(G))$. We have the following well-known correspondence.

\begin{theorem}
The restriction of $D$ to $\mathcal{T}(G)$ gives an isomorphism $\mathcal{T}(G) \rightarrow \mathcal{B}(D(G))$.
\end{theorem}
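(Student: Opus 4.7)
The plan is to exploit the fact that both $\mathcal{T}(G)$ and $\mathcal{B}(D(G))$ admit natural generating sets indexed by the vertex set of $G$, and to show that $D$ matches these generating sets up to sign. As already noted in the discussion preceding Lemma~\ref{Avecut}, every element of $\mathcal{T}(G)$ is an integer combination of the single-vertex cuts $\overrightarrow{[\{v\}, \bar{\{v\}}]}$ with $v \in V(G)$. On the dual side, $\mathcal{B}(D(G))$ is generated by definition by the oriented facial walks $\overrightarrow{f_F}$, and since $D$ maps vertices of $G$ bijectively onto faces of $D(G)$, these are naturally indexed by $v$ as $\overrightarrow{f_{D(v)}}$.

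The core step will be to prove, for each $v \in V(G)$, the local identity
\[
D\bigl(\overrightarrow{[\{v\}, \bar{\{v\}}]}\bigr) \;=\; \pm\, \overrightarrow{f_{D(v)}},
\]
with a sign that is uniform across vertices. The oriented edges appearing on the left-hand side are precisely the $(v,e,b) \in \overrightarrow{E}$ as $e$ ranges over the edges of $G$ incident to $v$; their duals under $D$ are exactly the oriented edges of $D(G)$ that bound the face $D(v)$, which is also where $\overrightarrow{f_{D(v)}}$ lives. What remains is to check that all these dual edges $D(v,e,b)$ have $D(v)$ on a common side, so that they assemble coherently into either $\overrightarrow{f_{D(v)}}$ or $-\overrightarrow{f_{D(v)}}$. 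This is an orientation computation in a disk neighbourhood of $v$: because $\Sigma$ is orientable and the embedding is cellular, such a neighbourhood inherits a consistent local orientation, and the defining rule $D(\overrightarrow{e}) = (D(\mathrm{left}(\overrightarrow{e})), D(e), D(\mathrm{right}(\overrightarrow{e})))$ then forces the choice of side to be the same for every $e$ incident to $v$.

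Granted the local identity, the remainder is formal. Linearity gives $D(\mathcal{T}(G)) \subseteq \mathcal{B}(D(G))$, and because each generator of $\mathcal{B}(D(G))$ arises (up to sign) as the $D$-image of a generator of $\mathcal{T}(G)$, the reverse inclusion also holds. Injectivity of $D|_{\mathcal{T}(G)}$ is then automatic, since $D$ is already an isomorphism on the ambient edge spaces $\mathcal{E}(G) \to \mathcal{E}(D(G))$ (it is induced by the edge bijection between $G$ and $D(G)$). Combining surjectivity onto $\mathcal{B}(D(G))$ with injectivity yields the desired isomorphism.

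The hard part will be the orientation check in the local identity. The excerpt is deliberately informal about the $\mathrm{left}/\mathrm{right}$ conventions on the oriented surface, recording only the consequence $D(D(\overrightarrow{e})) = -\overrightarrow{e}$, so one must pin down a local coordinate system around $v$ and verify the uniform sign directly. A clean way to do this is to observe that, in a small orientable disk around $v$, the dual operation rotates each oriented primal edge by a fixed $90^\circ$ (say clockwise); since this rotation is the same for every edge of $G$ leaving $v$, the faces of $D(G)$ playing the role of $\mathrm{left}$ for the dual edges are all on the same side of $D(v)$, giving the required coherent orientation of the facial walk.
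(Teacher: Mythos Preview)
Your proposal is correct and follows the same approach as the paper: both prove the generator-level identity $D(\overrightarrow{[\{v\},\bar{\{v\}}]}) = -\overrightarrow{f_{D(v)}}$ (you write $\pm$ with uniform sign, which is fine) and conclude by noting this matches the standard generating sets of $\mathcal{T}(G)$ and $\mathcal{B}(D(G))$. The paper's proof is in fact even terser than yours---it simply asserts the identity and says ``this defines a bijective correspondence''---so your discussion of the orientation check and of injectivity via the ambient isomorphism $\mathcal{E}(G)\to\mathcal{E}(D(G))$ fills in details the paper leaves implicit.
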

\begin{proof}
Note that for a vertex $v$ of $G$, and $f_v$ the facial walk around $D(v)$, we have
\[ D(\overrightarrow{[\{v\}, \bar{\{v\}}]}) = -\overrightarrow{f_v}. 
\]
This defines a bijective correspondence.
\hfill$\Box$
\end{proof}

Rather than working with $\mathcal{T}(G)$, we can work instead with $\mathcal{B}(D(G))$ using the isomorphism $D$. Since all boundaries are sums of oriented cycles, we can use shortest-path algorithms to find shortest boundaries in $D(G)$, which if done suitably, can give us $f$-sparsest cuts in $G$.

For $\sigma \in \mathcal{B}(D(G))$, define $\hat{w}(\sigma):=w(D^{-1}(\sigma))$. Notice that for all $\sigma \in \mathcal{B}(D(G))$, we have $|D^{-1}(\sigma)| = |\sigma|$. Thus, defining 
\begin{equation}
\hat{d}^f_{D(G)}(\sigma) = \frac{|\sigma|}{f(|\hat{w}(\sigma)|)},
\end{equation}
we have that
\[ \min_{\sigma \in \mathcal{B}(D(G))} \hat{d}^f_{D(G)}(\sigma) = \min_{\phi \in \mathcal{T}(G)} d^f_G(\phi) = d^f(G).
\]

We now set about trying to minimize $\hat{d}^f_{D(G)}$. Our next lemma says that when minimizing $\hat{d}^f_{D(G)}$, we can restrict attention to elements of $\mathcal{B}(D(G))$ that are the sum of at most $g+1$ oriented circuits (where $g$ is the genus of $G$).

 
First a proposition. 

\begin{proposition} \label{Circuits}
Let $G(V,E)$ be a graph cellularly embedded on $\Sigma$. For every cut $[S,\bar{S}]$ of $G$ there exist vertex disjoint circuits $w_1, \ldots, w_r$ such that 
\[ \sigma = D(\overrightarrow{[S,\bar{S}]}) = \overrightarrow{w_1} + \cdots + \overrightarrow{w_r}
\] 
and
\[ |\sigma| = |\overrightarrow{w_1}| + \cdots + |\overrightarrow{w_r}|. 
\]
\end{proposition}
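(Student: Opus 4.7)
The plan is to realise $\sigma$ as a sum of directed Eulerian circuits over the weakly connected components of the oriented subgraph of $D(G)$ that $\sigma$ supports.

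First I would establish a local balance condition: at every vertex $D(F)$ of $D(G)$, the number of oriented edges of $\sigma$ entering $D(F)$ equals the number leaving. Unpacking the definition, $D$ sends an oriented edge $\overrightarrow{e}=(a,e,b)$ of $\overrightarrow{[S,\bar{S}]}$ to an oriented edge of $D(G)$ leaving $D(\mathrm{left}(\overrightarrow{e}))$ and entering $D(\mathrm{right}(\overrightarrow{e}))$. Thus a contribution to $\sigma$ leaves $D(F)$ exactly when the facial walk of $F$ takes a step from some vertex in $S$ to some vertex in $\bar{S}$, and it enters $D(F)$ exactly when the facial walk takes a step from $\bar{S}$ to $S$. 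Because the facial walk is closed, these two types of transitions occur equally often, yielding the required balance.

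Next I would let $H$ be the directed subgraph of $D(G)$ whose oriented edges are precisely those appearing in $\sigma$ (each with multiplicity one, since every cut edge appears once), and take $H_1,\dots,H_r$ to be its weakly connected components; these have pairwise disjoint vertex sets by definition. Each $H_i$ inherits balance from $\sigma$ and, being weakly connected, is strongly connected: if not, the set $R$ of vertices reachable from some $u\in V(H_i)$ would be a proper subset with no edges leaving it, but then balance (summed over $R$) would force no edges entering $R$ from outside either, contradicting weak connectivity. A strongly connected balanced digraph admits a directed Eulerian circuit, which I take as $w_i$.

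Finally I would check the stated conclusions. Each $w_i$ traverses every edge of $H_i$ exactly once in its $\sigma$-orientation, so it is a circuit in the paper's sense, and $\overrightarrow{w_i}$ is precisely the restriction of $\sigma$ to $H_i$. Summing over $i$ gives $\sigma = \sum_{i=1}^r \overrightarrow{w_i}$, and since the $H_i$ partition the edge set of $H$, we also have $|\sigma| = \sum_{i=1}^r |\overrightarrow{w_i}|$. Vertex-disjointness is immediate from the choice of the $H_i$ as connected components. The only delicate step is the first one, where the orientation convention of $D$ (which reverses left and right) must be aligned carefully with the direction of traversal of the facial walk; once this bookkeeping is settled, the remainder is standard digraph theory.
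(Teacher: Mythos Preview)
Your argument is correct and follows the same overall arc as the paper: establish an in/out balance condition at every dual vertex, then decompose the resulting balanced oriented subgraph into closed walks whose edge sets partition the support of $\sigma$.

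The one genuine difference is how balance is obtained. The paper observes that $\sigma\in\mathcal{B}(D(G))\subseteq\mathcal{C}(D(G))$ and has all coefficients in $\{0,1\}$; since any sum of oriented cycles satisfies in-degree equals out-degree at every vertex (and this survives cancellation), balance follows immediately. You instead derive balance directly by reading off the transitions $S\to\bar S$ and $\bar S\to S$ along the facial walk of each face $F$: this is more elementary and bypasses the inclusion $\mathcal{B}\subseteq\mathcal{C}$ altogether. On the decomposition side, you are more explicit than the paper about securing \emph{vertex}-disjointness: the paper just says one can ``walk around'' to form disjoint circuits, whereas your use of weakly connected components and a directed Eulerian circuit on each makes it transparent that the resulting $w_i$ share no vertices. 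Both routes arrive at the same conclusion with comparable effort; yours is a touch more self-contained, the paper's a touch quicker given the ambient setup.
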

\begin{proof}
\begin{proof} 
We know $\sigma = D(\overrightarrow{[S,\bar{S}]})$ is an element of $\mathcal{B}(D(G)) \subseteq \mathcal{C}(D(G))$ in which no edge occurs more than once. Thus we can write 
\[ \sigma = \sum_{i=1}^s \overrightarrow{c_i} = \sum_{\overrightarrow{e} \in \overrightarrow{E}} \lambda_{\overrightarrow{e}}\overrightarrow{e},
\]
where each $\overrightarrow{c_i}$ is an oriented cycle and each $\lambda_{\overrightarrow{e}} \in \{0,1\}$. The number of oriented edges of $\sigma$ entering and exiting any given vertex must be equal since this is the case for any oriented cycle and remains the case after cancellation of oriented edges when summing oriented cycles. Thus we can walk around $D(G)$ using the oriented edges of $\{\overrightarrow{e}: \lambda_{\overrightarrow{e}}=1\}$ to form disjoint closed circuits $w_1, \ldots, w_r$, where $\sigma = \overrightarrow{w_1} + \cdots + \overrightarrow{w_r}$.
Furthermore, there is no cancellation of edges when we add these disjoint oriented circuits together; hence
\[ |\sigma| = |\overrightarrow{w_1}| + \cdots + |\overrightarrow{w_r}|.
\]
\hfill$\Box$
\end{proof}
\end{proof}

Recall that for each vertex $v$ of $G$, $D(v)$ is a face (a topological disk in $\Sigma$) of $D(G)$ and $f_v$ is the facial walk of $D(v)$. The \emph{closed face} $D^*(v)$ is a closed disk which has the facial walk $f_v$ embedded along its boundary. Every $\overrightarrow{e} \in \overrightarrow{E}$ occurs on the boundary of some $D^*(v)$; thus each edge of $G$ is either embedded on two distinct closed faces or is embedded twice on the same closed face. Note that $\Sigma$ can be constructed by gluing these closed faces together along common edges of $G$ (respecting the orientation). We can now prove our lemma.

\begin{lemma} \label{g+1}
Suppose $G=(V,E)$ is a graph cellularly embedded on a surface $\Sigma$ of genus $g$. There exists $\sigma \in \mathcal{B}(D(G))$ that minimizes $\hat{d}^f_{D(G)}$ such that
\[ \sigma = \overrightarrow{w_1} + \cdots + \overrightarrow{w_r},
\] 
where $\overrightarrow{w_1}, \ldots, \overrightarrow{w_r}$ are disjoint oriented circuits of $D(G)$ and $r \leq g+1$. 
Furthermore $m \geq |\sigma| = |\overrightarrow{w_1}| + \cdots + |\overrightarrow{w_r}|$, where $m=|E|$.
\end{lemma}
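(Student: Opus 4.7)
The plan is to start from an ordinary cut-based minimizer and then use the topology of $\Sigma$ to bound the number of circuits. By Lemma~\ref{Avecut} combined with the isomorphism $D:\mathcal{T}(G)\to \mathcal{B}(D(G))$, the minimum of $\hat{d}^f_{D(G)}$ over $\mathcal{B}(D(G))$ is attained by some $\sigma=D(\overrightarrow{[S,\bar{S}]})$ arising from a bona fide cut of $G$. Proposition~\ref{Concut} then lets us choose $[S,\bar{S}]$ so that both $G[S]$ and $G[\bar{S}]$ are connected. Applying Proposition~\ref{Circuits} to this cut gives a decomposition $\sigma=\overrightarrow{w_1}+\cdots+\overrightarrow{w_r}$ into vertex-disjoint circuits of $D(G)$ satisfying $|\sigma|=\sum_{i=1}^r|\overrightarrow{w_i}|$. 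The bound $m\geq|\sigma|$ is immediate, since $\sigma$ is supported on $[S,\bar{S}]\subseteq E$. So the only real task is to show $r\leq g+1$.

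The idea is to realise $\sigma$ as the common boundary of two surfaces-with-boundary that together tile $\Sigma$. Define the closed regions
\[ R_S \;=\; \bigcup_{v\in S} D^*(v), \qquad R_{\bar{S}} \;=\; \bigcup_{v\in \bar{S}} D^*(v), \]
obtained by gluing the closed faces of $D(G)$ along shared edges. An edge $e=uv$ of $G$ with $u,v\in S$ is interior to $R_S$; if $u,v\in\bar{S}$, it is interior to $R_{\bar{S}}$; otherwise it belongs to $[S,\bar{S}]$ and lies on $\partial R_S=\partial R_{\bar{S}}$. Because $G[S]$ is connected and each $D^*(v)$ is a disk, $R_S$ is a compact connected orientable surface with boundary, and likewise for $R_{\bar{S}}$. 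The vertex-disjointness of the $w_i$ guarantees that their images in $\Sigma$ form $r$ pairwise disjoint simple closed curves, which are precisely the boundary components shared by $R_S$ and $R_{\bar{S}}$.

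The bound then drops out of Euler characteristic additivity. Writing $g_S$ and $g_{\bar{S}}$ for the genera of $R_S$ and $R_{\bar{S}}$, and noting that each has $r$ boundary components,
\[ 2-2g \;=\; \chi(\Sigma) \;=\; \chi(R_S)+\chi(R_{\bar{S}}) \;=\; (2-2g_S-r)+(2-2g_{\bar{S}}-r), \]
which rearranges to $r = g+1-g_S-g_{\bar{S}} \leq g+1$, as required.

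The main obstacle is the topological step in the middle paragraph: one must verify carefully that the gluing of closed faces of $D(G)$ along edges of $G[S]$ really produces a surface with boundary (with no singularities at vertices, which would require tracing around the facial walks at each $v\in S$ and using cellularity of the embedding), and that the $w_i$ correspond bijectively to its boundary components. Once those geometric identifications are in place, the Euler-characteristic computation is routine.
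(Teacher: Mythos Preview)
Your proposal is correct and follows essentially the same route as the paper: pick a minimal $f$-sparsest cut via Proposition~\ref{Concut} and Lemma~\ref{Avecut}, decompose $\sigma$ via Proposition~\ref{Circuits}, build the two pieces $R_S,R_{\bar S}$ from the closed dual faces, and finish with an Euler-characteristic count. The only real difference is bookkeeping at the last step. The paper does a combinatorial count on the embedded graphs $H_S\hookrightarrow\Sigma_S$ and $H_{\bar S}\hookrightarrow\Sigma_{\bar S}$, and is careful to claim only $b_S,b_{\bar S}\geq r$ (since a single circuit $w_i$ could revisit a vertex of $D(G)$ and thus contribute more than one boundary component after gluing); this yields $g+1=g_S+g_{\bar S}+\tfrac12(b_S+b_{\bar S})\geq r$. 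You instead invoke additivity $\chi(\Sigma)=\chi(R_S)+\chi(R_{\bar S})$ and assert exactly $r$ boundary components. That stronger assertion is the point you yourself flag as the ``main obstacle'', and indeed it is not quite automatic from Proposition~\ref{Circuits} (circuits have distinct edges but may repeat vertices). The fix is painless: your additivity line gives $b=g+1-g_S-g_{\bar S}$ for the common number $b$ of boundary components, and vertex-disjointness of the $w_i$ gives $r\leq b$, so $r\leq g+1$ follows just as in the paper.
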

\begin{proof}
By Proposition~\ref{Concut}, there exists an $f$-sparsest cut $[S,\bar{S}]$ of $G$ for which $G[S]$ and $G[\bar{S}]$ are both connected. Thus $\overrightarrow{[S,\bar{S}]} \in \mathcal{T}(G)$ minimizes $d^f_G$ (over all elements of $\mathcal{T}(G)$ by Lemma~\ref{Avecut}), and so $\sigma = D(\overrightarrow{[S,\bar{S}]})$ minimizes $\hat{d}^f_{D(G)}$ (over all elements of $\mathcal{B}(D(G))$).  By Proposition~\ref{Circuits}, we know that we can write $\sigma = D(\overrightarrow{[S,\bar{S}]}) = \overrightarrow{w_1} + \cdots + \overrightarrow{w_r}$, where $m \geq |\sigma| = |\overrightarrow{w_1}| + \cdots + |\overrightarrow{w_r}|$. It remains to show that $r \leq g+1$.

Let $(H_S,\Sigma_S)$ be obtained from $(G,\Sigma)$ as follows. Take the set of closed faces $\{D^*(v):v \in S\}$ and for each edge $v_1v_2 \in G[S]$, glue $D^*(v_1)$ and $D^*(v_2)$ together along their common edge $D(v_1v_2)$, respecting orientation, to form $\Sigma_S$. Since $G[S]$ is connected, $\Sigma_S$ is a surface with boundary. The facial cycles of the glued faces now give a graph $H_S$ (which is a subgraph of $D(G)$) embedded on $\Sigma_S$. Observe that each edge of $D(G)$ occurs at most once as an edge of $H_S$ and the edges of $w_1, \ldots, w_r$ form the boundary of $\Sigma_S$. Disjoint edges of $D(G)$, when they occur in $H_S$, remain disjoint; hence, since $w_1, \ldots, w_r$ are disjoint, $\Sigma_S$ must have at least $r$ boundary components. We form $(H_{\bar{S}},\Sigma_{\bar{S}})$ analogously and by symmetry it has the same properties described above. Note that $\Sigma$ can be formed from $\Sigma_S$ and $\Sigma_{\bar{S}}$ by gluing them together suitably along their boundaries. Note also that the edges of $H_S$ are precisely the duals of the edges of $G$ incident with $S$: similarly for $H_{\bar{S}}$. We apply Euler's theorem to the two embeddings.

Let $g$, $g_S$ and $g_{\bar{S}}$ be the genii of $\Sigma$, $\Sigma_S$, and $\Sigma_{\bar{S}}$ respectively. Let $n$, $m$, $\ell$, and $b=0$ denote the number of vertices, edges, faces, and boundary components for the embedding of $G$ on $\Sigma$. Let $n_S$, $m_S$, $\ell_S$, and $b_S$ denote the numbers of vertices, edges, faces, and boundary components respectively for the embedding of $H_S$ on $\Sigma_S$, and analogously for $\bar{S}$. Finally, let $n_b$ and $m_b$ denote the numbers of vertices and edges that occur on the boundaries of $\Sigma_S$ and $\Sigma_{\bar{S}}$. We have
\begin{align*}
n &= n_1 + n_2 - n_b ,\\
m &= m_1 + m_2 - m_b,  \\
\ell &= \ell_1 + \ell_2, \\
n_b &= m_b \:\:\: \text{(since boundaries consist of disjoint cycles)} \\
b_S, b_{\bar{S}} &\geq r
\end{align*}
Using Euler's formula and the first three equalities, we have
\[
2- 2g = n - m + \ell + b = (2-2g_S) + (2-2g_{\bar{S}}) + m_b - n_b - b_S - b_{\bar{S}}.
\]
Rearranging, and using the fourth and fifth statements above, we have
\[
g+1 = g_S + g_{\bar{S}} + \frac{1}{2}\Big( (n_b - m_b) + (b_S + b_{\bar{S}}) \Big) \geq r.
\] 
\hfill$\Box$
\end{proof}

We can easily use shortest-path algorithms to find shortest cycles in a graph. However in this situation, we are required to find a shortest boundary (loosely speaking). We require a simple way of testing whether a cycle is a boundary. This is accomplished using the ideas of homology. We only require the following fact about the homology of graphs on surfaces. If $G$ is a graph cellularly embedded on an orientable surface $\Sigma$ of genus $g$, then the quotient group $\mathcal{C}(G) / \mathcal{B}(G)$ is isomorphic to $\field{Z}^{2g}$; this follows easily from standard results on cellular homology.


For an $n$-vertex graph $G$ cellularly embedded on a closed orientable surface $\Sigma$ of genus $g$, a \emph{system of loops} is a set of cycles of $G$ through a common vertex such that cutting $\Sigma$ along these cycles gives a topological disk. By Euler's formula, every system of loops must consist of $2g$ cycles. Erickson and Whittlesey \cite{Eric} give a greedy algorithm, which, given a cellular embedding of $G$ on $\Sigma$, finds a system of loops $c(1), \ldots, c(2g)$ in $O(n\log{n} + gn)$ time.  

We define a homomorphism $\Theta_i: \mathcal{C}(D(G)) \rightarrow \field{Z}$, where, for every $\sigma \in \mathcal{C}(D(G))$, the integer $\Theta_i(\sigma )$ measures the net number of times $\sigma$ crosses $c(i)$ (here sign indicates the direction of crossings). Let us define $\Theta_i$ formally.

For each oriented edge $\overrightarrow{e}$ of $G$ and each oriented edge $\overrightarrow{e^*}$ of $D(G)$, define
\[ \Theta_{\overrightarrow{e}}(\overrightarrow{e^*}) = 
\begin{cases}
1 &\text{ if } D(\overrightarrow{e}) = \overrightarrow{e^*}; \\
-1 &\text{ if } D(\overrightarrow{e}) = -\overrightarrow{e^*}; \\
0 &\text{ otherwise.}
\end{cases}
\]
For every $\phi = \sum_i\lambda_i \overrightarrow{e_i} \in \mathcal{E}(G)$ and every $\sigma = \sum_j\mu_j \overrightarrow{e^*_j} \in \mathcal{E}(D(G))$, we define
\[  \Theta_{\phi}(\sigma ) = \sum_i \sum_j \lambda_i \mu_j \Theta_{\overrightarrow{e_i}}(\overrightarrow{e_j^*}),
\]
which counts the directed number of times $\phi$ and $\pi$ cross each other. It is easy to check that the above is well defined. We write $\Theta_i$ as a shorthand for $\Theta_{\overrightarrow{c(i)}}$. Finally, define $\Theta: \mathcal{E}(D(G)) \rightarrow \field{Z}^{2g}$ by setting
\[ \Theta(\sigma) = \big(\Theta_1(\sigma), \ldots, \Theta_{2g}(\sigma) \big)
\]
for every $\sigma \in \mathcal{E}(D(G))$. 

We have the following proposition which effectively says that any cycle of $D(G)$ that intersects each $\overrightarrow{c(i)}$ a net number of zero times is a boundary. It is very much what we expect from the properties of homology, but we give the details for completeness.
\begin{proposition} \label{Theta}
The function $\Theta$ defined above is a surjective homomorphism from $\mathcal{C}(D(G))$ to $\field{Z}^{2g}$ whose kernel is $\mathcal{B}(D(G))$. 
\end{proposition}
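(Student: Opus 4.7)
My plan is to prove three assertions: that $\Theta$ is a well-defined homomorphism; that $\mathcal{B}(D(G)) \subseteq \ker \Theta$; and that the induced map on the quotient is an isomorphism onto $\field{Z}^{2g}$, which yields both surjectivity and the reverse inclusion $\ker \Theta \subseteq \mathcal{B}(D(G))$ simultaneously. The first assertion is immediate from the definition: the formula $\Theta_\phi(\sigma) = \sum_{i,j}\lambda_i\mu_j \Theta_{\overrightarrow{e_i}}(\overrightarrow{e_j^*})$ is bilinear in the coefficients, and the identity $\Theta_{\overleftarrow{e}}(\overrightarrow{e^*}) = -\Theta_{\overrightarrow{e}}(\overrightarrow{e^*})$ (and similarly in the second argument), which follows from $D(\overleftarrow{e}) = -D(\overrightarrow{e})$, makes the expression well defined modulo the edge-space relations; restricting to $\mathcal{C}(D(G))$ preserves the homomorphism property.

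For the second assertion, I would check $\Theta_i(\overrightarrow{f_v}) = 0$ on the generating facial walks $\overrightarrow{f_v}$ of $\mathcal{B}(D(G))$. The oriented edges of $\overrightarrow{f_v}$ are the duals of the edges of $G$ incident to $v$, and because $\Sigma$ is orientable and $\overrightarrow{f_v}$ consistently orients the boundary of the disk $D(v)$, there is a single sign $\eta_v \in \{\pm 1\}$ such that $\Theta_{\overrightarrow{e}}(\overrightarrow{f_v}) = \eta_v$ whenever $\overrightarrow{e}$ points out of $v$ and $-\eta_v$ whenever it points into $v$. Since $c(i)$ is a cycle of $G$, at each vertex it traverses it uses exactly one outgoing and one incoming edge, so these contributions cancel and $\Theta_i(\overrightarrow{f_v}) = 0$.

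The first two assertions together let $\Theta$ descend to a homomorphism $\bar\Theta: \mathcal{C}(D(G))/\mathcal{B}(D(G)) \to \field{Z}^{2g}$. The cited homological fact, applied to the cellular embedding of $D(G)$, identifies the domain with $\field{Z}^{2g}$, so $\bar\Theta$ is a homomorphism between two free abelian groups of the same finite rank; for such a map, surjectivity automatically implies injectivity, since the kernel would have to be a rank-zero subgroup of $\field{Z}^{2g}$ and hence trivial. It therefore suffices to prove that $\Theta$ is surjective.

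The main obstacle is this surjectivity, which is where the defining property of the system of loops comes in. The plan is to construct, for each $i \in \{1, \ldots, 2g\}$, a cycle $\sigma_i \in \mathcal{C}(D(G))$ with $\Theta_j(\sigma_i) = \delta_{ij}$. Existence of such curves at the homological level is an instance of Poincar\'e duality for the closed orientable surface $\Sigma$: the classes $[c(1)], \ldots, [c(2g)]$ form a basis of $H_1(\Sigma; \field{Z})$, which is equivalent to the defining property that cutting $\Sigma$ along them yields a topological disk, and the nondegenerate intersection pairing on $H_1(\Sigma)$ provides a dual basis. Concretely, in the $4g$-gon obtained by the cutting, one pairs each $c(i)$ with its partner on the boundary and draws an arc across the disk joining the two copies of that partner; after the identifications, this arc becomes a closed curve on $\Sigma$ that crosses $c(i)$ exactly once and every other $c(k)$ with net zero. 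Pushing this curve into $D(G)$ (which is possible because each face of $G$ is a topological disk containing a unique vertex of $D(G)$, and every $G$-edge is crossed by a unique $D(G)$-edge) yields a closed walk in $D(G)$, hence the required cycle $\sigma_i \in \mathcal{C}(D(G))$ with $\Theta_j(\sigma_i) = \delta_{ij}$.
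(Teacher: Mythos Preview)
Your proof follows essentially the same three-step architecture as the paper's: check the homomorphism property, show $\mathcal{B}(D(G))\subseteq\ker\Theta$ on facial walks, and then combine surjectivity with the homology fact $\mathcal{C}(D(G))/\mathcal{B}(D(G))\cong\field{Z}^{2g}$ to pin down the kernel. The logical packaging differs only cosmetically (the paper proves surjectivity before the kernel containment and uses the isomorphism theorems explicitly, while you pass to the quotient and argue rank).

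The one place where your argument is looser than the paper's is the surjectivity construction. You invoke the standard $4g$-gon and a ``partner'' pairing coming from the intersection form, but a system of loops as defined here need not produce the canonical word $a_1b_1a_1^{-1}b_1^{-1}\cdots$, so the partner structure is not immediately available; and joining the two copies of the \emph{partner} of $c(i)$ would produce a curve crossing the partner once, not $c(i)$. The paper sidesteps all of this: it observes that each $c(j)$ must contain at least one edge $e_j$ not lying on any other $c(i)$ (else $c(j)$ would be redundant in cutting $\Sigma$ open), and then draws an arc in the disk joining the two boundary images of $e_j$. That arc, read off as a closed walk in $D(G)$, crosses $c(j)$ exactly once and no other $c(i)$ at all, giving $\Theta(\overrightarrow{c^*(j)})=\pm\vec{u}_j$ directly. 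This is both more elementary and more precise than appealing to Poincar\'e duality; your high-level plan is fine, but you should replace the partner argument with this edge-based one.
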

\begin{proof} 
The fact that $\Theta$ is well defined and a homomorphism is easy to check. 

To see that $\Theta$ is surjective, we must find for each $j=1, \ldots, 2g$, a cycle $c^*(j) \in \mathcal{C}(D(G))$ such that $\Theta(c^*(j)) = \pm \vec{u}_j$, where $\vec{u}_j$ is the vector that has $1$ in the jth component and $0$'s elsewhere. 
Consider the embedding of $G$ on the (closed) topological disk $T$ formed by ungluing $\Sigma$ along the cycles $c(1), \ldots, c(2g)$; thus each unglued edge is embedded twice along the boundary of $T$. Note that each $c(j)$ has at least one edge $e_j$ that does not belong to any of the other $c(i)$'s: indeed, if all the edges of $c(j)$ belonged to other cycles, then $c(j)$ would be redundant and we could cut the surface into a topological disk with fewer than $2g$ cycles.

Pick points $x$ and $\bar{x}$ on the boundary of $T$ that lie in the interior of the two embeddings of $e_j$. Let $\gamma$ be a topological path in $T$ from $x$ to $\bar{x}$ that is not incident with any other points along the boundary of $T$ and not incident with any vertices of $G$. It is clear that such a path exists. Then $\gamma$ corresponds to a topological cycle in $\Sigma$ that crosses the cycle $\overrightarrow{c(j)}$ exactly once and does not cross any of the other $c(i)$'s. By listing the alternating sequence of faces and edges of $G$ that $\gamma$ crosses, we obtain a cycle $c^*(j)$ in $D(G)$ and hence an oriented cycle $\overrightarrow{c^*(j)} \in \mathcal{C}(D(G))$  that crosses $\overrightarrow{c(j)}$ once but does not cross any of the other $c(i)$'s. Thus, we have
\[ \Theta(\overrightarrow{c^*(j)}) = \pm \vec{u}_j,
\] 
showing that $\Theta$ is surjective.

Finally we show that $\ker(\Theta) = \mathcal{B}(D(G))$. For any oriented cycle $\overrightarrow{C}$ and any oriented facial walk $\overrightarrow{f}$ of $D(G)$, we have 
\[ \Theta_{\overrightarrow{C}}(\overrightarrow{f_i})=0
\]
since $\overrightarrow{C}$ enters and exits the face $F_i$ an equal number of times so that all contributions cancel. Thus, we have that $\Theta(\overrightarrow{f})= \vec{0}$ for all oriented facial walks $\overrightarrow{f}$ of $D(G)$. Therefore $\Theta(\phi)=\vec{0}$ for all $\phi \in \mathcal{B}(D(G))$ showing that $\mathcal{B}(D(G)) \subseteq \ker(\Theta)$. Now we have
\[ \field{Z}^{2g} \cong \frac{\mathcal{C}(D(G))}{\ker(\phi)} \cong 
   \frac{\mathcal{C}(D(G))/\mathcal{B}(D(G))}{\ker(\phi)/\mathcal{B}(D(G))} \cong 
   \frac{\field{Z}^{2g}}{\ker(\phi)/\mathcal{B}(D(G))}, 
\]
where the first isomorphism is from the First Isomorphism Theorem, the second is from the Third Isomorphism Theorem, and the third is given by the fact about homology groups given earlier. We deduce that $\ker(\phi)/\mathcal{B}(D(G))$ must be the trivial group (by standard properties of finitely generated abelian groups) and that therefore we must have $\ker(\phi)=\mathcal{B}(D(G))$.
\hfill$\Box$
\end{proof}

For convenience, we combine some of the results we have so far to give the following proposition.

\begin{proposition} \label{Final}
There exists an element $\sigma \in \mathcal{B}(D(G))$ that minimizes $\hat{d}^f_{D(G)}$ and satisfies the following properties.
\begin{itemize}
\item[(a)] We can write $\sigma$  as  $\overrightarrow{w_1} + \cdots + \overrightarrow{w_r}$ where the $\overrightarrow{w_i}$ are oriented circuits in $D(G)$ and $r \leq g+1$.
\item[(b)] We have $|\sigma| = |\overrightarrow{w_1}| + \cdots + |\overrightarrow{w_r}|$ with $|\overrightarrow{w_i}| \leq m$ for all $i$.
\item[(c)] $\Theta(\sigma) = \Theta(\overrightarrow{w_1}) + \cdots + \Theta(\overrightarrow{w_r})= \vec{0}$. 
\end{itemize}
\end{proposition}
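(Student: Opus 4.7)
The plan is to assemble Proposition~\ref{Final} as a direct corollary of the two main preceding results, Lemma~\ref{g+1} and Proposition~\ref{Theta}. There is essentially no new content to prove; the task is to extract and align the conclusions so that a single $\sigma$ witnesses all three properties simultaneously.

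First I would invoke Lemma~\ref{g+1} to obtain an element $\sigma \in \mathcal{B}(D(G))$ that minimizes $\hat{d}^f_{D(G)}$ together with a decomposition
\[
\sigma = \overrightarrow{w_1} + \cdots + \overrightarrow{w_r},
\]
where $\overrightarrow{w_1}, \ldots, \overrightarrow{w_r}$ are disjoint oriented circuits of $D(G)$ with $r \leq g+1$, and
\[
m \geq |\sigma| = |\overrightarrow{w_1}| + \cdots + |\overrightarrow{w_r}|.
\]
This immediately yields (a) and (b), since the disjointness of the circuits means no edge is cancelled in the sum, and each $|\overrightarrow{w_i}|$ is bounded above by the total $|\sigma| \leq m$.

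For (c), I would use Proposition~\ref{Theta}, which asserts that $\ker(\Theta) = \mathcal{B}(D(G))$. Since the $\sigma$ obtained above lies in $\mathcal{B}(D(G))$, we conclude that $\Theta(\sigma) = \vec{0}$. Because $\Theta$ is a homomorphism on $\mathcal{C}(D(G))$, and each $\overrightarrow{w_i}$ is an oriented circuit (hence an element of $\mathcal{C}(D(G))$), we may split
\[
\vec{0} = \Theta(\sigma) = \Theta(\overrightarrow{w_1}) + \cdots + \Theta(\overrightarrow{w_r}),
\]
which is exactly (c).

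I do not anticipate a genuine obstacle here: the real work was done in Lemma~\ref{g+1} (the $r \leq g+1$ bound via the Euler-characteristic accounting on $\Sigma_S$ and $\Sigma_{\bar S}$) and in Proposition~\ref{Theta} (identifying the boundary space with the kernel of $\Theta$). The only small point to be careful about is that the very same minimizer $\sigma$ furnished by Lemma~\ref{g+1} must be used when invoking Proposition~\ref{Theta}, so that properties (a)--(c) hold simultaneously rather than for possibly different minimizers; this is automatic since Lemma~\ref{g+1} already places $\sigma$ in $\mathcal{B}(D(G))$.
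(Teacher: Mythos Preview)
Your proposal is correct and matches the paper's own proof, which simply cites Lemma~\ref{g+1} for (a) and (b) and Proposition~\ref{Theta} for (c). Your additional remarks about using the same minimizer $\sigma$ and splitting $\Theta$ by its homomorphism property are accurate elaborations of what the paper leaves implicit.
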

\begin{proof}
Statements (a) and (b) follow from Lemma~\ref{g+1}, and statement (c) follows from Proposition~\ref{Theta}.
\hfill$\Box$
\end{proof}

Next we define a \emph{covering graph} of $D(G)$ in which certain shortest paths will correspond to elements of $\mathcal{C}(D(G))$ whose sum will minimize $\hat{d}^f_{D(G)}$.

We construct an infinite multigraph $H=(V_H,E_H)$ from $D(G)$ as follows. We set $V_H = V' \times \field{Z} \times \field{Z}^{2g}$, where $V'$ is the vertex set of $D(G)$. For convenience, we describe the oriented edges of $H$ before describing its edges. For each oriented edge $\overrightarrow{e} = (u_1, e, u_2)$ of $D(G)$ and each $(k,\vec{v}) \in \field{Z} \times \field{Z}^{2g}$, we have an oriented edge of $H$, denoted $(\overrightarrow{e},k,\vec{v})^{\ast}$, from
\[ (u_1, k, \vec{v}) \:\:\: \text{ to } \:\:\: (u_2, k + \hat{w}(\overrightarrow{e}), \vec{v} + \Theta(\overrightarrow{e}) ).
\]
The same edge oriented oppositely is given by $(\overleftarrow{e},k+ \hat{w}(\overrightarrow{e}),\vec{v} + \Theta(\overrightarrow{e}))^{\ast}$. We write $(\overrightarrow{e},k,\vec{v})$ for the edge of $H$ corresponding to the oriented edge $(\overrightarrow{e},k,\vec{v})^{\ast}$; thus every edge of $H$ has two labels.

Walks of $D(G)$ naturally correspond to walks of $H$ as follows. Let $w=(u_0, e_1, u_1, \ldots, e_{t-1}, u_t)$ be a walk of $D(G)$. Let $w_i = (u_0, e_1, u_1, \ldots, e_{i-1}, u_i)$ be the same walk up to the $i$th vertex, and let $\overrightarrow{e_i} = (u_{i-1},e_i, u_i)$. Let $H(w)$ be the walk in $H$ given by $H(w) = (u^H_0, e^H_1, u^H_1, \ldots, e^H_{t-1}, u^H_t)$, where $u^H_0=(u,0,\vec{0})$ and
\[ u^H_i = (u_i, \hat{w}(\overrightarrow{w_i}), \Theta (\overrightarrow{w_i})) 
\:\:\: \text{ and } \:\:\: e^H_i = (\overrightarrow{e_i}, \hat{w}(\overrightarrow{w_{i-1}}), \Theta(\overrightarrow{w_{i-1}})).
\]
It is easy to check that $w \mapsto H(w)$ is a bijective correspondence between walks of $D(G)$ and walks of $H$ that start at $(u,0,\vec{0})$ for some $u \in V'$. Furthermore, $w$ is a walk of $D(G)$ from $u$ to $u'$ and satisfies $\hat{w}(\overrightarrow{w})=k$ and $\Theta(\overrightarrow{w})=\vec{v}$ if and only if $H(w)$ is a walk of $H$ from $(u,0,\vec{0})$ to $(u', k, \vec{v})$.

Defining $V_H^* = V' \times \{-mn, \ldots, mn \} \times \{-m, \ldots, m\}^{2g} \subset V_H$, let $H^*$ be the finite graph induced by $H$ on $V_H^*$.
For $(u,k,\vec{v}) \in V_H^*$, define $p(u,k,\vec{v})$ to be a shortest path in $H^*$ from $(u,0,\vec{0})$ to $(u,k,\vec{v})$ (if it exists); this path corresponds to a closed walk in $D(G)$. For fixed $k$ and $\vec{v}$, let $p(k,\vec{v})$ be the path of minimum length in $\{p(u,k,\vec{v}): u \in U\}$ (if it exists). Let $w(k,\vec{v})$ be the (closed) walk of $D(G)$ corresponding to the path $p(k,\vec{v})$ in $H^*$. We have the following lemma. 

\begin{lemma} \label{Less}
Suppose $w$ is a circuit of $D(G)$ such that $\hat{w}(\overrightarrow{w})=k$ and $\Theta(\overrightarrow{w})=\vec{v}$ and $|\overrightarrow{w}| \leq m$. Then
\[  |\overrightarrow{w}| \geq |\overrightarrow{w(k,\vec{v})}|.
\]  
\end{lemma}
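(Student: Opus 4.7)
The plan is to lift $w$ to its corresponding walk $H(w)$ in $H$, verify that this lifted walk stays entirely within the finite induced subgraph $H^*$, and then chain standard shortest-path inequalities. Since $w$ is a circuit starting and ending at some vertex $u$ of $D(G)$, the bijective correspondence $w \mapsto H(w)$ produces a walk in $H$ starting at $(u,0,\vec{0})$ and ending at $(u, \hat{w}(\overrightarrow{w}), \Theta(\overrightarrow{w})) = (u, k, \vec{v})$. Moreover $w$ being a circuit guarantees $|w| = |\overrightarrow{w}|$, so the lifted walk uses exactly $|\overrightarrow{w}|$ edges of $H$.

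The heart of the argument is to verify that every vertex visited by $H(w)$ lies in $V_H^* = V' \times \{-mn, \ldots, mn\} \times \{-m, \ldots, m\}^{2g}$. Write $\overrightarrow{w_i}$ for the first $i$ oriented edges of $\overrightarrow{w}$, so that the $i$th vertex of $H(w)$ is $(u_i, \hat{w}(\overrightarrow{w_i}), \Theta(\overrightarrow{w_i}))$ and $i \leq |\overrightarrow{w}| \leq m$. For the weight coordinate, $D^{-1}(\overrightarrow{w_i})$ is a sum of $i$ pairwise distinct oriented edges of $G$ (no cancellation, as $w$ is a circuit uses each edge at most once), so applying the homomorphism $w$, the triangle inequality, and property (ii) of Lemma~\ref{Weight} yields $|\hat{w}(\overrightarrow{w_i})| \leq in \leq mn$. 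For each homology coordinate $\Theta_j$, a single oriented edge $\overrightarrow{e^*}$ of $D(G)$ is the dual of exactly one oriented edge of $G$, so $c(j)$ contributes to $\Theta_j(\overrightarrow{e^*})$ at most once, giving $|\Theta_j(\overrightarrow{e^*})| \leq 1$; summing over the at most $m$ edges of $\overrightarrow{w_i}$ gives $|\Theta_j(\overrightarrow{w_i})| \leq m$. Thus $H(w)$ is a walk of $H^*$.

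To conclude, $H(w)$ is a walk in $H^*$ from $(u, 0, \vec{0})$ to $(u, k, \vec{v})$ of length $|\overrightarrow{w}|$, so the shortest such path $p(u, k, \vec{v})$ has length at most $|\overrightarrow{w}|$; by the definition of $p(k, \vec{v})$ as the minimum over starting vertices, $|p(k, \vec{v})| \leq |\overrightarrow{w}|$. The corresponding closed walk $w(k, \vec{v})$ in $D(G)$ has $|w(k, \vec{v})| = |p(k, \vec{v})|$ edges, and since $|\overrightarrow{w(k, \vec{v})}| \leq |w(k, \vec{v})|$ in general, we obtain $|\overrightarrow{w(k, \vec{v})}| \leq |\overrightarrow{w}|$, as required. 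The main (and essentially only) obstacle is the stay-in-$H^*$ verification: the ranges defining $V_H^*$ were precisely engineered to accommodate any circuit of length at most $m$, and the argument pivots on the no-cancellation property from the circuit hypothesis together with the $\pm 1$-per-edge bound on each $\Theta_j$.
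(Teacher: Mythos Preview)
Your proof is correct and follows essentially the same approach as the paper: lift $w$ to $H(w)$, verify it stays inside $H^*$ via the coordinate bounds, and chain the shortest-path inequalities. You supply slightly more detail than the paper on why each $|\Theta_j(\overrightarrow{e^*})| \leq 1$ (the paper just asserts the $\Theta$ bound is ``not hard to see''), but the structure and substance are the same.
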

\begin{proof}
Suppose the circuit $w$ starts at $u \in U$.
We shall show that $H(w)$ is a walk on $H^{\ast}$ (from $(u,0,\vec{0})$ to $(u,k,\vec{v})$). Then using the fact that $w$ is a circuit and consequently $|\overrightarrow{w}|=|w|$, we have
\[ |\overrightarrow{w}| = |w| = |H(w)| \geq |p(u,k,\vec{v})| \geq |p(k,\vec{v})| = |w(k,\vec{v})| 
\geq |\overrightarrow{w(k,\vec{v})}|,
\]
proving the lemma.

To see that $H(w)$ is a walk on $H^*$, we note first that $|\overrightarrow{w}| = |w| \leq m$. Let $w_i$ be the same walk as $w$ up to the $i$th vertex. Then by property (ii) from Lemma~\ref{Weight}, we have that $|\hat{w}(\overrightarrow{w_i})| \leq |\overrightarrow{w_i}|n \leq mn$. Also, since $\overrightarrow{w} \leq m$, it is not hard to see that $\Theta(\overrightarrow{w_i}) \in \{-m, \ldots, m\}^{2g}$. This shows that the walk $H(w)$ never leaves $V_H^*$, as required.

\hfill$\Box$
\end{proof}
Let $W$ be the set of all the $\overrightarrow{w(k, \vec{v})}$. Let  
\[ X = \{\overrightarrow{w_1} + \cdots + \overrightarrow{w_r}: \overrightarrow{w_i} \in W \: \forall i \:\text{ and }\: r \leq g+1 \},
\] 
and let $Y = \{ \sigma \in X: \Theta(\sigma) = \vec{0}\}$. We have the following corollary.
\begin{corollary} \label{Last}
There exists an element of $Y$ that minimizes $\hat{d}^f_{D(G)}$.
\end{corollary}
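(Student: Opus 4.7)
The plan is to start from the minimizer guaranteed by Proposition~\ref{Final} and replace each of its constituent circuits by the corresponding shortest walk found in the covering graph $H^{\ast}$, showing that nothing is lost.

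More precisely, I would first invoke Proposition~\ref{Final} to obtain $\sigma^{\ast} = \overrightarrow{w_1} + \cdots + \overrightarrow{w_r} \in \mathcal{B}(D(G))$ minimizing $\hat{d}^f_{D(G)}$, with $r \leq g+1$, each $\overrightarrow{w_i}$ an oriented circuit of $D(G)$ with $|\overrightarrow{w_i}| \leq m$, and $\sum_i \Theta(\overrightarrow{w_i}) = \vec{0}$. For each $i$, set $k_i = \hat{w}(\overrightarrow{w_i})$ and $\vec{v}_i = \Theta(\overrightarrow{w_i})$, and let $\overrightarrow{w(k_i,\vec{v}_i)} \in W$ be the oriented walk extracted from $H^{\ast}$ by the construction preceding Lemma~\ref{Less}. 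Define
\[ \sigma' := \overrightarrow{w(k_1,\vec{v}_1)} + \cdots + \overrightarrow{w(k_r,\vec{v}_r)} \in X. \]

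Next I would verify that $\sigma' \in Y$ and realises at most the same value of $\hat{d}^f_{D(G)}$ as $\sigma^{\ast}$. By construction, $\hat{w}(\overrightarrow{w(k_i,\vec{v}_i)}) = k_i$ and $\Theta(\overrightarrow{w(k_i,\vec{v}_i)}) = \vec{v}_i$; since $\hat{w}$ and $\Theta$ are homomorphisms we get $\Theta(\sigma') = \sum_i \vec{v}_i = \Theta(\sigma^{\ast}) = \vec{0}$ and $\hat{w}(\sigma') = \sum_i k_i = \hat{w}(\sigma^{\ast})$. Proposition~\ref{Theta} then gives $\sigma' \in \mathcal{B}(D(G))$, so $\sigma' \in Y$. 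Applying Lemma~\ref{Less} to each circuit $w_i$ (whose hypotheses $|\overrightarrow{w_i}| \leq m$, $\hat{w}(\overrightarrow{w_i})=k_i$, $\Theta(\overrightarrow{w_i})=\vec{v}_i$ hold), followed by the triangle inequality for $|\cdot|$, yields
\[ |\sigma'| \leq \sum_{i=1}^r |\overrightarrow{w(k_i,\vec{v}_i)}| \leq \sum_{i=1}^r |\overrightarrow{w_i}| = |\sigma^{\ast}|, \]
where the last equality uses statement (b) of Proposition~\ref{Final}. Since the denominators $f(|\hat{w}(\sigma')|)$ and $f(|\hat{w}(\sigma^{\ast})|)$ agree, this gives $\hat{d}^f_{D(G)}(\sigma') \leq \hat{d}^f_{D(G)}(\sigma^{\ast})$, and combined with the minimality of $\sigma^{\ast}$ forces equality, so $\sigma'$ is a minimizer lying in $Y$.

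The argument is essentially a bookkeeping exercise, so I do not anticipate a serious obstacle; the only point that needs care is checking that $\sigma'$ really lies in $\mathcal{B}(D(G))$ so that $\hat{d}^f_{D(G)}(\sigma')$ equals the value one wants to compare against $d^f(G)$. This is precisely what the homology characterisation $\ker(\Theta) = \mathcal{B}(D(G))$ from Proposition~\ref{Theta} is designed to supply, and it is used exactly once in the verification above.
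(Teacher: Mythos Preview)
Your proposal is correct and follows essentially the same route as the paper's own proof: start from the minimizer provided by Proposition~\ref{Final}, replace each circuit $\overrightarrow{w_i}$ by the corresponding $\overrightarrow{w(k_i,\vec{v}_i)}$ via Lemma~\ref{Less}, and check that the resulting sum lies in $Y$ with no larger numerator and identical denominator. The only cosmetic difference is that you invoke Proposition~\ref{Theta} explicitly to place $\sigma'$ in $\mathcal{B}(D(G))$, whereas the paper simply observes at the outset that every element of $Y$ lies in $\mathcal{B}(D(G))$.
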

\begin{proof}
Clearly every element of $Y$ is in $\mathcal{B}(D(G))$. Let $\sigma \in \mathcal{B}(D(G))$ minimize $\hat{d}^f_{D(G)}$ and satisfy the three conditions of Proposition~\ref{Final}; in particular
\[ \sigma = \overrightarrow{w_1} + \cdots + \overrightarrow{w_r}
\] 
for $r \leq g+1$, where each $|\overrightarrow{w_i}| \leq m$, and where $\Theta(\sigma) = \vec{0}$. For each $i$, let $\Theta(\overrightarrow{w_i}) = \vec{v}_i$ and let $\hat{w}(\overrightarrow{w_i}) = k_i$. Then by Lemma~\ref{Less} we have that 
\[ |\overrightarrow{w(k_i,\vec{v}_i)}| \leq |\overrightarrow{w_i}|.
\]
Let $\sigma^* \in X$ be given by 
\[ \sigma^* = \overrightarrow{w(k_1,\vec{v}_1)} + \cdots + \overrightarrow{w(k_r,\vec{v}_r)}.
\] 
We have that
\[ |\sigma| = |\overrightarrow{w_1}| + \cdots + |\overrightarrow{w_r}|
           \geq |\overrightarrow{w(k_1,\vec{v}_1)}| + \cdots + |\overrightarrow{w(k_r,\vec{v}_r)}|
  \geq |\sigma^*|,
\] 
that $\Theta(\sigma^*)=\Theta(\sigma)=\vec{0}$, and that $\hat{w}(\sigma^*)=\hat{w}(\sigma)$. Thus $\sigma^* \in Y$ and
\[ \hat{d}^f_G(\sigma^*) = \frac{|\sigma^*|}{f(|\hat{w}(\sigma^* )| )} \leq \frac{|\sigma|}{f(|\hat{w}(\sigma )| )}
= \hat{d}^f_G(\sigma),
\] 
showing that $\sigma^* \in Y$ minimizes $\hat{d}^f_G$.
\hfill$\Box$
\end{proof}
  
We now have all the ingredients to present our algorithm and to prove its correctness.

\section{The Algorithm}

In this section, we present the basic steps of our algorithm and compute its running time. In order to keep the presentation simple, we do not optimize the running time. Our algorithm runs in time $O(n^{2g^2 + 4g + 7})$, and it seems unlikely that our methods can give an $n^{o(g^2)}$-time algorithm without significant modification.  

Let $f: [0, \frac{1}{2}] \rightarrow [0, \infty)$ be a fixed concave, increasing function that is computable in polynomial time on the rationals, and let $g$ be a fixed non-negative integer. The input for our algorithm is an $n$-vertex undirected multigraph $G$ of genus $g$. Since $n=|V|$ then $m=|E|=O(n)$ from Euler's formula. Using the result of Mohar \cite{Moh} mentioned earlier, we can find an embedding of $G$ on a surface $\Sigma$ of genus $g$ in $O(n)$ time. From the embedding we can construct (in $O(n)$ time) the dual graph $D(G) = (V',E')$ together with the function $D$ which maps each oriented edge $\overrightarrow{e} \in \overrightarrow{E}$ of $G$ to its dual $D(\overrightarrow{e}) \in  \overrightarrow{E'}$ in $D(G)$. We have $|E'|=|E|=O(n)$, and from Euler's formula, we have $|V'|=O(n)$. By the result of Erickson and Whittlesey \cite{Eric} mentioned earlier, we can find a system of loops $\overrightarrow{c(1)}, \ldots, \overrightarrow{c(2g)}$ of $G$ in time $O(n)$.  At this point the algorithm no longer requires the embedding of $G$.
 
 Next we construct and store the (restricted) functions $\hat{w}: \overrightarrow{E'} \rightarrow \field{Z}$ and $\Theta : \overrightarrow{E'} \rightarrow \field{Z}^{2g}$. The computation and storage of these functions imposes an insignificant time cost in the final analysis, so any crude bound on the running time is sufficient.
 Recall that a function $w : \overrightarrow{E} \rightarrow \field{Z}$ satisfying the properties of Lemma~\ref{Weight} can be constructed in $O(n^2)$ time. Then $\hat{w} = w \circ D^{-1}$ can also be constructed in $O(n^2)$ time. In order to construct $\Theta$, observe that we can construct $\Theta_{\overrightarrow{e}}: \overrightarrow{E'} \rightarrow \field{Z}$ in $O(n)$ time. Now $\Theta_i$ can be computed in $O(n^2)$ time since if $\overrightarrow{c(i)} = \overrightarrow{e_1} + \cdots + \overrightarrow{e_k}$ (where $k \leq n$, then 
\[
\Theta_i = \Theta_{\overrightarrow{c(i)}} = \Theta_{\overrightarrow{e_1}} + \cdots + \Theta_{\overrightarrow{e_k}}.  
\]
Thus $\Theta = (\Theta_1, \ldots, \Theta_{2g})$ can be computed in $2gO(n^2) = O(n^2)$ time.

From the (restricted) functions $\hat{w}$ and $\Theta$, we can construct the graph $H^*$ directly from its definition (given in the previous section). Observe that $H^*$ has $O(n) \cdot (2mn+1) \cdot (2m+1)^{2g} = O(n^{2g+3})$ vertices and $O(n^{2g+3})$ edges. Thus it takes $O(n^{2g+3})$ time to construct $H^*$. For each $(u,k,\vec{v}) \in V_{H^*}$, we compute and store the shortest path $p(u,k,\vec{v})$ from $(u,0,\vec{0})$ to $(u,k,\vec{v})$ in $H^*$. Finding each shortest path requires $O(|V_{H^*}| \log(|V_{H^*}|)) = O(n^{2g+4})$ time using Dijkstra's algorithm, and so, computing all the $p(u,k,\vec{v})$ requires $|V_{H^*}|O(n^{2g+4}) = O(n^{4g+7})$ time. For each fixed $k \in \{-mn, \ldots, mn \}$ and  $\vec{v} \in \{-n, \ldots, n\}^{2g}$, we compute and store $p(k,\vec{v})$, the path of minimum length amongst the $p(u,k,\vec{v})$, and we use
$p(k,\vec{v})$ to compute and store $w(k,\vec{v})$ and $\overrightarrow{w(k,\vec{v})}$ (recall that $w(k,\vec{v})$ is the closed walk in $D(G)$ corresponding to $p(k,\vec{v})$ as described in the previous section). The time cost so far is $O(n^{4g+7})$.

Recall the sets $W$, $X$, and $Y$ from the previous section. Having stored the set
 $W$ of all walks $\overrightarrow{w(k, \vec{v})}$, we compute and store the set 
\[ X = \{\overrightarrow{w_1} + \cdots + \overrightarrow{w_r}: \overrightarrow{w_i} \in W \: \forall i \:\text{ and }\: r \leq g+1 \}.
\] 
Adding elements of $W$ together requires $O(n)$ time; hence computing and storing $X$ requires $O(n|X|) = O(n|W|^{g+1}) = O(n \cdot n^{2(g+1)^2})$ time. We compute $\Theta(\sigma)$ for every $\sigma \in X$ and store the set $Y = \{ \sigma \in X: \Theta(\sigma)=\vec{0} \}$, which takes $O(n|X|)$ time. Finally we find an element $\sigma^*$ of $Y$ that minimizes $\hat{d}^f_{D(G)}$, which takes $O(n|Y|)$ time, and from Corollary~\ref{Last} we know $\sigma^*$ minimizes $\hat{d}^f_{D(G)}$ over all elements of $\mathcal{B}(D(G)$. Thus $d^f(G)$ is given by $\hat{d}^f_{D(G)}(\sigma^*)$, and the total time taken to find $\sigma^*$ is dominated by $\max(O(n^{2(g+1)^2 + 1}), O(n^{4g+7})) = O(n^{2g^2 + 4g +7})$.

In order to find an $f$-sparest cut of $G$, we compute $\phi^* = D^{-1}(\sigma^*)$ and decompose it as in the proof of Lemma~\ref{Avecut} to give
\[ \phi = \sum_{i=1}^{k}  \overrightarrow{ [S_i,\bar{S_i}] },
\]
where $k=O(mn)$ from our bound on $H^*$. Now one of the cuts $[S_i,\bar{S_i}]$ is an $f$-sparsest cut by the proof of Lemma~\ref{Avecut}, which can be found in $O(n^2)$ time once $\sigma^*$ has been found.

\section{Open Problems}
An obvious question that arises from this work is whether the running time of our algorithm can be improved. Specifically, it would be interesting to know if the problem of finding the edge expansion of a graph is fixed parameter tractable with respect to genus.

Our algorithm crucially relies on our graph being embedded on an orientable surface. In particular, we use the fact that a graph embedded on an orientable surface has a \emph{directed} dual; this is not the case for graphs embedded on non-orientable surfaces. It would be interesting to develop methods for finding edge expansion of graphs embedded on non-orientable surfaces.
\bibliographystyle{plain}
\bibliography{Vul}

\begin{thebibliography}{10}

\bibitem{AMS}
Christoph Ambuhl, Monaldo Mastrolilli, and Ola Svensson.
\newblock Inapproximability results for sparsest cut, optimal linear
  arrangement, and precedence constrained scheduling.
\newblock In {\em FOCS '07: Proceedings of the 48th Annual IEEE Symposium on
  Foundations of Computer Science}, pages 329--337, Washington, DC, USA, 2007.
  IEEE Computer Society.

\bibitem{ARV}
Sanjeev Arora, Satish Rao, and Umesh Vazirani.
\newblock Expander flows, geometric embeddings and graph partitioning.
\newblock {\em J. ACM}, 56(2):Art. 5, 37, 2009.

\bibitem{Bon}
Paul~S. Bonsma.
\newblock Linear time algorithms for finding sparsest cuts in various graph
  classes.
\newblock In {\em 6th {C}zech-{S}lovak {I}nternational {S}ymposium on
  {C}ombinatorics, {G}raph {T}heory, {A}lgorithms and {A}pplications},
  volume~28 of {\em Electron. Notes Discrete Math.}, pages 265--272. Elsevier,
  Amsterdam, 2007.

\bibitem{CEN}
Erin~W. Chambers, Jeff Erickson, and Amir Nayyeri.
\newblock Minimum cuts and shortest homologous cycles.
\newblock In {\em SCG '09: Proceedings of the 25th annual symposium on
  Computational geometry}, pages 377--385, New York, NY, USA, 2009. ACM.

\bibitem{Eric}
Jeff Erickson and Kim Whittlesey.
\newblock Greedy optimal homotopy and homology generators.
\newblock In {\em Proceedings of the {S}ixteenth {A}nnual {ACM}-{SIAM}
  {S}ymposium on {D}iscrete {A}lgorithms}, pages 1038--1046 (electronic), New
  York, 2005. ACM.

\bibitem{Gar}
Michael~R. Garey and David~S. Johnson.
\newblock {\em Computers and intractability}.
\newblock W. H. Freeman and Co., San Francisco, Calif., 1979.
\newblock A guide to the theory of NP-completeness, A Series of Books in the
  Mathematical Sciences.

\bibitem{Gib}
P.~J. Giblin.
\newblock {\em Graphs, surfaces and homology}.
\newblock Chapman \& Hall, London, second edition, 1981.
\newblock An introduction to algebraic topology, Chapman and Hall Mathematics
  Series.

\bibitem{GT}
Jonathan~L. Gross and Thomas~W. Tucker.
\newblock {\em Topological graph theory}.
\newblock Dover Publications Inc., Mineola, NY, 2001.
\newblock Reprint of the 1987 original [Wiley, New York; MR0898434 (88h:05034)]
  with a new preface and supplementary bibliography.

\bibitem{Rao}
Tom Leighton and Satish Rao.
\newblock Multicommodity max-flow min-cut theorems and their use in designing
  approximation algorithms.
\newblock {\em J. ACM}, 46(6):787--832, 1999.

\bibitem{Mat}
David~W. Matula and Farhad Shahrokhi.
\newblock Sparsest cuts and bottlenecks in graphs.
\newblock {\em Discrete Appl. Math.}, 27(1-2):113--123, 1990.
\newblock Computational algorithms, operations research and computer science
  (Burnaby, BC, 1987).

\bibitem{Moh}
Bojan Mohar.
\newblock A linear time algorithm for embedding graphs in an arbitrary surface.
\newblock {\em SIAM J. Discrete Math.}, 12(1):6--26 (electronic), 1999.

\bibitem{Tho}
Bojan Mohar and Carsten Thomassen.
\newblock {\em Graphs on surfaces}.
\newblock Johns Hopkins Studies in the Mathematical Sciences. Johns Hopkins
  University Press, Baltimore, MD, 2001.

\bibitem{Park}
James~K. Park and Cynthia~A. Phillips.
\newblock Finding minimum-quotient cuts in planar graphs.
\newblock In {\em STOC}, pages 766--775, 1993.

\bibitem{Rao2}
Satish~B. Rao.
\newblock Faster algorithms for finding small edge cuts in planar graphs.
\newblock In {\em STOC '92: Proceedings of the twenty-fourth annual ACM
  symposium on Theory of computing}, pages 229--240, New York, NY, USA, 1992.
  ACM.

\end{thebibliography}
\nocite{*}

\end{document}